\newtheorem*{theorem*}{Theorem}
\newtheorem*{definition*}{Definition}
\newtheorem{lemma}{Lemma}
\newtheorem{definition}{Definition}
\newtheorem{proposition}{Proposition}
\newcommand{\ron}[1]{{\color{red} Ron: #1}}
\newcommand{\stam}[1]{}
\newcommand{\cupdot}{\mathbin{\mathaccent\cdot\cup}}
\newcommand{\D}{\mathcal{D}}
\newcommand{\R}{\mathbb{R}}
\newcommand{\E}{\mathbb{E}}
\renewcommand{\O}[1]{\mathcal{O}\left( #1\right) }
\newcommand{\Ot}[1]{\tilde{\mathcal{O}}\left( #1\right) }
\newcommand{\Om}[1]{\Omega\left( #1\right) }
\newcommand{\Omt}[1]{\tilde{\Omega}\left( #1\right) }
\newcommand{\Tt}[1]{\tilde{\Theta}\left( #1\right) }
\newcommand{\Th}[1]{{\Theta}\left( #1\right) }
\newcommand{\T}[2]{{\mathcal{U}}_{#1}\left( #2\right) }
\newcommand{\singleedge}[1]{{\mathcal{L}}\left( #1\right) }
\newcommand{\singleedgep}[2]{{\mathcal{L}^{#2}}\left( #1\right) }
\newcommand{\set}[1]{\left\{#1\right\}}
\newcommand\hmm[1]{\ifnum\ifhmode\spacefactor\else2000\fi>1900 \uppercase{#1}\else#1\fi}
\begin{document}
	
	\title{Finding a Hidden Edge}
	\author{Ron Kupfer\\ Harvard University \\ \href{mailto:ron.kupfer@mail.huji.ac.il}{ron.kupfer@mail.huji.ac.il}
	\and	Noam Nisan\\ The Hebrew University of Jerusalem\\ \href{mailto:noam@cs.huji.ac.il}{noam@cs.huji.ac.il}}
	
	\date{}
	\maketitle	
\begin{abstract}
We consider the problem of finding an edge in a hidden undirected graph $G = (V, E)$ with $n$ vertices, in a model where we only allowed queries that ask whether or not a subset of vertices contains an edge. We study the non-adaptive model and show that while in the deterministic model the optimal algorithm requires $\binom{n}{2}$ queries (i.e., querying for any possible edge separately), in the randomized model $\Tt{n}$ queries are sufficient (and needed) in order to find an edge.\\
In addition, we study the query complexity for specific families of graphs, including Stars, Cliques, and Matchings, for both the randomized and deterministic models.\\
Lastly, for general graphs, we show a trade-off between the query complexity and the number of rounds, $r$, made by an adaptive algorithm. We present two algorithms with $\O{rn^{2/r}}$ and $\Ot{rn^{1/r}}$ sample complexity for the deterministic and randomized models, respectively.
\end{abstract}

\section{Introduction}\label{sec:intro}	
	Consider the following problem: the input is a hidden graph on $n$ vertices and your goal is to identify an edge in the graph.
For this you are allowed to use the following type of query: pick a subset $S$ of the vertices and the answer to the query $S$ is a 
single bit that is true iff there exist two vertices
$u,v \in S$ such that the edge $(u,v)$ is in the hidden graph.  How many such (non-adaptive) queries are needed in order to identify an edge?

These types of questions and similar ones pop up in a surprisingly diverse set of situations. 
Many of them touch on the issue of reductions from search problems (``find an edge'') to decision problems (``is there an edge''). 
The type of computational model that is implied by many of these contexts allows queries that pick a subset of the input bits and ask for their disjunction (or, in a dual model, their conjunction). 
There are several variants of models that differ by the collection of sets that may be used as a query and by whether we allow adaptivity
and randomization.  In Section~\ref{sec:related} we discuss some of these models and their contexts.

\subsection{Warmup}

Before we proceed with our problem let us warm up by looking at the well-studied simplest variant
that ignores the graph structure and views the hidden $m={\binom{n}{2}}$ potential edges of the input graph
simply as a hidden vector $x \in \{0,1\}^m$, and allows accessing the input using queries that pick an
arbitrary $Q \subseteq \{1...m\}$ and ask whether there exists some 
bit $i \in Q$ with $x_i=1$. Our aim is to output some index $i \in \{1...m\}$ with $x_i=1$ (or that no such index $i$ exists, i.e. that $x=00...0$). 

Binary search will certainly solve this problem with $\log_2 (m+1)$ queries which is clearly optimal. Suppose however that we aim for a non-adaptive algorithm, one that makes all the queries before looking at any answer.  What is the non-adaptive complexity? 
It is a simple (and recommended) exercise to prove that one cannot save on querying all $m$ bits, i.e. that the non-adaptive complexity is exactly $m$ \cite{katona2011finding}. A more interesting exercise for the curious reader is to show that we can do better using randomization: 
one may randomly choose $O(\log^2 m)$ non-adaptive queries such that for every $x$, with high probability (over our randomized choices) an 
$i$ with $x_i=1$ is returned \cite{ben1992theory}.\footnote{Hint: start be solving the problem
for the special case where the hidden input contains exactly a single 1 bit.} It turns out that this is optimal and a matching $\Omega(\log^2 m)$ lower bound for randomized non-adaptive algorithms was given in \cite{kawachi2012query}.
	
\subsection{Our Results}
	\stam{
\begin{tabular}{ccc}
	& Deterministic & Randomized \\ 
	\hline 
	General Graphs & $\binom{n}{2}$ & $\O{n\log^2 n}, \Om{\frac{n}{\log^2 n}}$ \\ 
	Star & $\O{n\log n}, \Om{n}$ & $\O{n\log n}, \Om{\frac{n}{\log^2 n}}$ \\ 
	Clique & $\O{n}, \Om{\frac{n}{\log n}}$ & $\Th{\log^2 n}$ (to verify...) \\ 
	Matching & $\O{n\log n}, \Om{n}$ & $\Th{\log^2 n}$ \\ 
\end{tabular}

\begin{table}[h!]
	\centering
\begin{tabular}{ccc}
	& Deterministic & Randomized \\ 
	\hline 
	General Graphs	 & $\binom{n}{2}$ (\cite{katona2011finding})	& $\Tt{n}$ (Thm: \ref{thm:URgeneral}, \ref{thm:LRstar}) \\ 
	Star			 & $\Tt{n}$	(Thm: \ref{thm:UDstar}, \ref{thm:LDstar})		 & $\Tt{n}$  (Thm: \ref{thm:UDstar}, \ref{thm:LRstar}) \\ 
	Clique 			 & $\Tt{n}$ (Thm: \ref{thm:UDclique}, \ref{thm:LDclique})		& $\Tt{1}$ (Thm: \ref{thm:URclique}, \ref{thm:LRclique})\\ 
	Matching 		 & $\Tt{n}$ (Thm: \ref{thm:UDmatching}, \ref{thm:LDmatching})		& $\Tt{1}$ (Thm: \ref{thm:URmatching}, \ref{thm:LRmatching})\\ 
	Graphs using $r$ rounds 		 & $\Th{r\cdot n^{2/r}}$ (Thm: \ref{thm:UDrounds};\cite{gerbner2016rounds})		& $\Ot{r\cdot n^{1/r}}$ (Thm: \ref{thm:URrounds})\\ 
\end{tabular} 
\caption{Summary of our results. References are ordered (upper bound, lower bound).}
\end{table}
}

We now return to our model that does take into account the graph structure and does not allow querying an arbitrary set of edges $Q$ but rather only allow queries that specify a subset of the {\em vertices} and ask whether an edge is contained within this subset.
I.e., in terms of subsets of the edges, we only allow queries of the form $Q_S=\set{(u,v)\mid u,v \in S}$.  
As the trivial upper bound of querying all ${\binom{n}{2}}$ singleton edges is within our query model, and since the deterministic lower bound in the general model of $m={\binom{n}{2}}$ \cite{katona2011finding} still holds, the deterministic simultaneous complexity in our model is
clearly ${\binom{n}{2}}$.  So the basic remaining question is the randomized non-adaptive complexity.

Our main result shows that $\Tt{n}$ queries are needed and sufficient in order to find an edge in the non-adaptive randomized model.
\begin{restatable}{rThm}{General}\label{thm:General}
	There exists a randomized non-adaptive algorithm that finds an edge in a hidden graph, using $\Ot{n}$ queries. The probability of failure of the algorithm on any given input is polynomially small. Moreover, any randomized non-adaptive algorithm that succeeds in finding an edge with at least a constant probability makes at least  $\Omt{n}$ queries.
\end{restatable}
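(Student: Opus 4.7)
For the upper bound, the plan is to design a multi-scale randomized algorithm that handles two qualitatively different regimes: graphs where a random vertex-subset at the right density contains a single isolated edge (so a handful of hash-based refinement queries suffices to identify it), and star-like graphs where edges are concentrated around a few high-degree vertices (so we must essentially test every vertex for neighborhood). The first regime is handled by an $\Ot{1}$ sub-routine: for each scale $i \in \{0, 1, \ldots, \lceil\log n\rceil\}$, sample $O(\log n)$ random subsets $R$ at density $p_i = 2^{-i/2}$, query each, and for each $R$ additionally query $R \cap h_j^{-1}(c)$ for $O(\log n)$ random hash functions $h_j : V \to \{0, 1\}$ and each $c \in \{0, 1\}$. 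At the scale where $R$ is expected to contain $\Theta(1)$ edges, with constant probability it contains exactly one, and its two endpoints can be recovered from the resulting hash pattern by a search over $\binom{n}{2}$ candidate pairs.

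This first sub-routine fails on stars, where $R$ either contains no edges (when the center is outside $R$) or contains many edges sharing a common vertex (when the center is inside $R$). To cover these cases, I include per-vertex queries: for each vertex $v$ and each of $\Ot{1}$ shared random subsets $R$, query $\{v\} \cup R$ and similarly for bit-refinements of $R$ together with $v$. Comparing $Q(\{v\} \cup R)$ to $Q(R)$ reveals whether $v$ has a neighbor in $R$ beyond the edges already present in $R$; for a star this identifies the center $v^\star$, and an analogous bit-localization inside $R$ isolates one of its neighbors. The total number of queries is $\Ot{n}$, and the combined algorithm outputs an edge using whichever sub-routine produces a consistent candidate pair.

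For the lower bound, the plan is to apply Yao's minimax principle via a hard distribution concentrated on stars. Specifically, the hidden graph is a random star: the center $v^\star$ is chosen uniformly from $V$, and the leaf set $L$ is a random subset of an appropriate size (for instance, a single random leaf, which reduces the problem to finding a uniformly random hidden edge). Under this distribution, any deterministic non-adaptive family of queries $Q_1, \ldots, Q_k$ produces an answer vector whose $i$-th coordinate is $\mathbf{1}[v^\star \in Q_i \text{ and } L \cap Q_i \neq \emptyset]$. Writing the signature $\sigma(u) = (\mathbf{1}[u \in Q_i])_{i \le k} \in \{0,1\}^k$, this answer vector is determined by $\sigma(v^\star)$ together with the signatures of vertices in $L$. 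I plan to show, by a counting argument on these signatures, that unless $k = \Omt{n}$ the map from $(v^\star, L)$ to the answer vector is too coarse for a constant fraction of inputs to be decoded uniquely, so the algorithm cannot recover the correct edge with more than constant probability.

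The main obstacle for the upper bound is proving that the two sub-routines together cover all graphs while using only $\Ot{n}$ queries; in particular, the per-vertex sub-routine must handle graphs with arbitrary high-degree structure without an additional scale-dependent blow-up, which forces a careful choice of the shared random subsets and the bit-localization scheme. For the lower bound, the delicate step is sharpening the trivial $\Omega(\log n)$ information-theoretic bound to $\Omt{n}$: this requires exploiting the fact that each answer on $(v^\star, L)$ is governed only by whether the query contains $v^\star$ and intersects $L$, not by arbitrary coordinate-wise information, and I expect to carry this out via a second-moment or entropy-compression argument on the distribution of vertex signatures.
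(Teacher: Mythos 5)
Your upper-bound architecture (multi-scale random subsets with bit-localization, plus per-vertex detection for star-like structure) is in the same spirit as the paper's, which runs three query families in parallel: random pairs for dense graphs, per-vertex samples at every dyadic density $1/d$ for graphs with a high-degree vertex, and random subsets at density $1/\sqrt{n}$ for sparse low-degree graphs, each refined by the $\O{\log n}$-query unique-edge subroutine. However, your first sub-routine at $\Ot{1}$ total queries is not justified: the claim that "at the scale where $R$ has $\Theta(1)$ expected edges, $R$ contains exactly one edge with constant probability" fails whenever a vertex of degree $\gg \sqrt{m}$ exists, and you never show that the per-vertex sub-routine cleanly picks up every graph the first one misses (the paper's explicit three-way case split on $m$ versus $n/2$ and $d$ versus $\sqrt{n}$ is what makes coverage airtight). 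Moreover your densities $2^{-i/2}$ stop at $1/\sqrt{n}$, so the dense regime $m \gg n$ is outside the first sub-routine's range; the paper handles it separately with $\O{n\log n}$ uniformly random pairs. These are fixable but they are real holes: you need an explicit, exhaustive case decomposition.

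The lower-bound plan has a concrete error. The hard distribution you propose "for instance, a single random leaf" cannot give $\Omt{n}$: a graph with a single hidden edge can be found deterministically and non-adaptively with $\O{\log n}$ queries (this is exactly Proposition~\ref{pro:single_edge} and the explicit construction in Appendix~\ref{sec:exlicit}), so a uniformly random single edge only yields an $\Omega(\log n)$ bound. The paper's hard distribution takes a uniformly random center $v^\star$ and includes each other vertex as a leaf independently with probability $1/\log n$, and the density is the whole point: every query of size $\geq 10\log^2 n$ then almost surely contains some leaf, so its answer degenerates to the single bit $\mathds{1}[v^\star\in Q_i]$, while smaller queries almost surely miss $v^\star$ entirely. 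The answer vector therefore reveals nothing about which specific vertices are leaves, and even an algorithm handed $v^\star$ for free can do no better than guess a leaf, succeeding with probability $1/\log n = o(1)$. Your proposed "counting/second-moment argument on signatures" aims to show the answer map is not injective, but the paper's argument is stronger and simpler: it does not need injectivity to fail, it shows the answers are essentially independent of the leaf set. Without choosing the leaf density correctly, your argument does not get off the ground.
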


Note that this establishes a quadratic gap between the deterministic and randomized non-adaptive models in our setting.

We continue by analyzing the query complexity for specific families of graphs, both in the randomized and deterministic models.  
Assume that there is a guarantee that the hidden input graph comes from some family of possible graphs.  To what extent does this make
the problem easier?  We study this question for each of three families of graphs: cliques (the graph is known to be a clique on some 
unknown subset of the vertices), stars (a single unknown center vertex connected to some unknown set of vertices), and matchings (the degree of each vertex is at most 1).

\begin{restatable}{rThm}{Special}\label{thm:special}~
	\begin{itemize}
		\item For Star, Clique, and Matching graphs, $\Tt{n}$ non-adaptive deterministic queries are needed and sufficient in order to find an edge.
		\item For Clique and Matching graphs, there exist non-adaptive randomized algorithms which use only poly-logarithmically many queries.
		\item For Star graphs, any (even randomized) non-adaptive algorithm makes at least $\Tt{n}$ queries.
	\end{itemize}
\end{restatable}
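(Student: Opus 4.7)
The theorem bundles three sub-claims; I outline the approach for each. The upper bounds are constructive, and the lower bounds share a common adversarial / Yao-style template.

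\textbf{Deterministic $\Tt{n}$ bound for Star, Clique, Matching.} For the upper bounds I will give $O(n \log n)$-query families tailored to each structure. The star case illustrates the template: the $n$ queries $\{V \setminus \set{i}\}_{i \in V}$ isolate the center $c$---the unique ``no'' answer lies at $i = c$ when $|L|\ge 2$, while two ``no'' answers pinpoint both endpoints when $|L|=1$. An additional batch of $O(n \log n)$ queries drawn from an appropriate separating family (for instance, sets $\set{v} \cup B$ ranging over all $v\in V$ and $B$ in a carefully chosen $O(\log n)$-sized family) then decodes a specific leaf once $c$ is known. Clique and matching admit analogous structured constructions, exploiting the fact that every non-isolated vertex in a matching has a unique partner, and every vertex in a clique is adjacent to every other clique vertex. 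For the lower bounds I argue adversarially: for any deterministic algorithm with $t$ queries $S_1, \ldots, S_t$ and pair-signatures $\chi(e) := ([e \subseteq S_i])_i$, if $t$ is smaller than $n / \log^c n$ for an appropriate constant $c$, a counting/pigeonhole argument produces two graphs $G_1, G_2$ in the relevant family with $E(G_1) \cap E(G_2) = \emptyset$ yet identical query-answer vectors. The algorithm must output the same edge for both and therefore errs on at least one.

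\textbf{Randomized poly-logarithmic algorithm for cliques and matchings.} Both follow a ``guess the scale and decode'' template. For a clique on $C$ with $|C|=k$, sampling each vertex independently with probability $2/k$ yields a subset $S$ with $|S\cap C|=2$ with constant probability, in which case the induced hidden subgraph on $S$ is exactly one edge. I then locate that edge within $S$ using an internal randomized sub-routine---based on random-subset fingerprints and a union-bound decoding---that uses $O(\log^2 n)$ queries. Since $k$ is unknown, I iterate over $O(\log n)$ dyadic scales with $O(\log n)$ samples per scale, yielding $\text{polylog}(n)$ queries in total. Matchings are handled analogously with sampling rate $\Theta(1/\sqrt{k})$, so that the expected number of matching edges wholly contained in the sample is $\Theta(1)$, and the same sub-routine locates an isolated edge.

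\textbf{Randomized lower bound $\Omt{n}$ for stars.} By Yao's minimax principle, it suffices to exhibit a distribution over stars on which every deterministic algorithm using too few queries has constant failure probability. My candidate draws $c$ uniformly from $V$ and $L$ as a random subset of $V\setminus\set{c}$ of a tuned size. The key observation is that a non-adaptive query family must, for \emph{every} possible center $c$, simultaneously serve as an OR-search family for locating a leaf within $V\setminus\set{c}$: conditioning on a putative $c$, the subfamily $\{S_i \setminus \set{c} : c \in S_i\}$ must encode enough to identify an element of $L$. A counting argument on the bipartite vertex-query incidence structure then forces $t = \Omt{n}$. Alternatively, if the hard instances used to prove Theorem~\ref{thm:General} can be taken to be stars, this lower bound transfers directly. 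The main technical hurdle is exactly this step: a naive per-center OR-search bound only gives $\Omega(\log^2 n)$, so the argument must crucially leverage the non-adaptive requirement that one fixed query family works across all $n$ potential centers at once, while carefully accounting for the information each query provides about many candidate centers simultaneously.
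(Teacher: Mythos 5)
The most serious gap is in the randomized $\Omt{n}$ lower bound for stars, and you yourself flag it: you propose a per-center OR-search framing plus an unspecified ``counting argument on the bipartite vertex-query incidence structure'' but do not resolve it, and you correctly note that a naive per-center bound only gives $\Omega(\log^2 n)$. The paper's proof (Lemma~\ref{thm:LRgeneral}) closes this exactly where you get stuck, via a hard distribution with a tuned leaf probability and a size dichotomy on the queries. The distribution picks the center $v$ uniformly and includes each other vertex as a leaf with probability $1/\log n$. Then two things happen simultaneously for any fixed non-adaptive family of $o(n/\log^2 n)$ queries: queries of size at most $10\log^2 n$ contain the uniformly random $v$ with total probability $o(1)$ (so they are all negative); and queries of size at least $10\log^2 n$ that do contain $v$ almost surely also contain one of $v$'s random leaves, so they are positive with probability $1 - n^{-\Omega(1)}$. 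Hence with probability $1 - o(1)$ every query answer equals the indicator $[v \in S]$, which is independent of the leaf set $L$; conditioned on this, the algorithm must guess a leaf and succeeds with probability $\approx 1/\log n$. This ``small queries miss $v$; large queries only reveal whether $v\in S$'' dichotomy is the missing idea in your sketch, and it replaces the elaborate bipartite counting you were hoping to do.

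Two smaller points. First, your deterministic lower-bound template (pigeonhole two graphs with disjoint edge sets and identical answers) is plausible but heavier than needed; the paper instead gives direct reductions from witness finding for stars and matchings (embed $n-1$ bits as potential leaves of a fixed center $v_n$, or $n/2$ bits as potential matched pairs $\{v_i, u_i\}$), inheriting the tight $\binom{n}{2}$-bound of Katona, and a short adversary argument for cliques. Second, your deterministic $O(n\log n)$ star construction (use $\{V\setminus\{i\}\}$ to locate the center, then a separating family) is different from the paper's divide-and-conquer for Overlapping-Product graphs, but both are valid; likewise your randomized clique and matching algorithms match the paper's scale-guessing plus single-edge decoding, with the minor difference that the paper derandomizes the decoding sub-routine to $O(\log n)$ non-adaptive queries rather than $O(\log^2 n)$, which tightens polylog factors but does not change the conclusion.
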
 
\noindent

Our final result relaxes the non-adaptivity restriction and considers a model where more than a single round of queries is allowed.
We exhibit a trade-off between the query complexity and the number of rounds.

\begin{restatable}{rThm}{rounds}\label{thm:rounds}
	For finding an edge in a hidden graph using $r$ rounds, $\Th{r\cdot n^{2/r}}$ queries are necessary and sufficient in the deterministic model. There exists a randomized $r$-round algorithm that makes $\Ot{r\cdot n^{1/r}}$ queries.
\end{restatable}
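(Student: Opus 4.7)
For the deterministic upper bound, the plan is a recursive partitioning scheme. Maintain a vertex set $V_i$ known to contain an edge at the start of round $i$, with $V_1 = V$. In round $i$, partition $V_i$ into $k = \Theta(n^{1/r})$ roughly equal-sized groups $G_1, \ldots, G_k$ and query every pairwise union $G_a \cup G_b$ (the diagonal cases $a=b$ cover within-group edges). This costs $\binom{k+1}{2} = \O{n^{2/r}}$ queries per round. At least one pairwise union returns ``yes''; set $V_{i+1}$ to be such a union, so $|V_{i+1}| \le 2|V_i|/k$. After $r$ rounds $|V_{r+1}| = \O{1}$ and the edge is read off directly, yielding $\O{r \cdot n^{2/r}}$ queries in total. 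The matching deterministic lower bound is \cite{gerbner2016rounds}, which we apply directly.

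For the randomized upper bound, the plan is a recursion on $r$ that invokes Theorem \ref{thm:General} as a black box at every level. Define $\A_r(S)$, for $|S|=N$ guaranteed to contain an edge, as follows: for $r=1$, run Theorem \ref{thm:General} directly on the induced subgraph on $S$; for $r \ge 2$, partition $S$ into $k = \Theta(N^{1/r})$ groups $V_1, \ldots, V_k$ of size at most $\lceil N^{(r-1)/r}\rceil$, and in round 1 issue two batches of queries in parallel: (i) the $k$ within-group queries $V_1, \ldots, V_k$, and (ii) the $\Ot{N^{1/r}}$ queries prescribed by Theorem \ref{thm:General} applied to the \emph{super-graph} on super-vertices $\{V_1, \ldots, V_k\}$, whose super-edges are index-pairs $(i,j)$ crossed by some edge of the real graph restricted to $S$; a super-query on $A \subseteq \{1,\ldots,k\}$ is executed as the real query $\bigcup_{i \in A} V_i$. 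After round 1, if some $V_i$ has an internal edge (signalled by batch (i)) recursively invoke $\A_{r-1}(V_i)$; otherwise the super-graph has at least one super-edge, which batch (ii) locates with high probability via Theorem \ref{thm:General}, and we invoke $\A_{r-1}(V_i \cup V_j)$. Either recursive call is on a set of size $\O{N^{(r-1)/r}}$, so unrolling the recurrence $q(N,r) = \Ot{N^{1/r}} + q(\O{N^{(r-1)/r}}, r-1)$ yields $q(n,r) = \Ot{r \cdot n^{1/r}}$, since every recursive level contributes the same $\Ot{n^{1/r}}$ term.

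The main obstacle I foresee is a subtle interaction between the two round-1 batches: batch (ii)'s super-query answers can be confounded by within-group edges, because a super-query returns ``yes'' also if some group in its union contains an internal edge, and so Theorem \ref{thm:General}'s guarantee applies to the super-graph only when batch (i) is uniformly ``no''. The plan is to prioritize batch (i) over batch (ii), treating the latter as valid only in the regime where the super-graph queries are untainted by internal edges. A secondary accounting issue is that polylogarithmic factors could compound across the $r$ recursive levels; we plan to run each level at inverse-polynomial failure probability, absorbing a $\poly(\log N)$ overhead per level, and take a union bound over the $r$ levels so that both the total failure probability remains polynomially small and the total query count remains $\Ot{r \cdot n^{1/r}}$.
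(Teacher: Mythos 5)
Your proposal matches the paper's approach almost exactly: in both the deterministic and randomized cases, the paper partitions $V$ into $\Theta(n^{1/r})$ blocks of size $\Theta(n^{1-1/r})$, spends one round on pairwise-union queries (deterministically, all $\binom{k}{2}$ pairs; randomly, by running the single-round algorithm of Theorem~\ref{thm:General} on the ``partition graph'' whose super-vertices are the blocks), then recurses with $r-1$ rounds on a union of two blocks, yielding the recurrence you unroll. Your explicit batch~(i)/batch~(ii) split is the same idea packaged differently from the paper's partition-graph definition (which folds within-block edges into the super-graph by making an internally-edged block a full-degree super-vertex), and your union-bound treatment of compounding failure probability over $r$ levels parallels the paper's bookkeeping; the only thing the paper handles that you do not is the $r \geq \log n$ regime, where the inductive constant bound degenerates and the paper switches to a $6$-queries-per-round (resp.~$2$-per-round) binary-search scheme.
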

The lower bound for the deterministic case in this theorem was proven by~\cite{gerbner2016rounds}, and together with our randomized
upper bound implies a gap between randomized and deterministic algorithms for any fixed number of rounds.

All three sets of results are summarized in Table~\ref{table:results}.

\begin{table}[h]
	\centering
	\begin{tabular}{ccc}
		& Deterministic & Randomized \\ 
		\hline 
		General Graphs	 & $\binom{n}{2}$ (\cite{katona2011finding})	& $\Tt{n}$  \\ 
		Star			 & $\Tt{n}$			 & $\Tt{n}$   \\ 
		Clique 			 & $\Th{n}$ 		& $\Tt{1}$ \\ 
		Matching 		 & $\Tt{n}$ 	& $\Tt{1}$ \\ 
		General Graphs ($r$ rounds) 		 & $\Th{r\cdot n^{2/r}}$ (lower bound \cite{gerbner2016rounds})		& $\Ot{r\cdot n^{1/r}}$ \\ 
	\end{tabular} 
	\caption{Summary of our results}
	\label{table:results}
\end{table}

\stam{
\subsubsection{Deterministic Results}
\paragraph{Deterministic Algorithmic Results}
\begin{restatable}{rPro}{UDgeneral}\label{thm:UDgeneral}
	There exist a deterministic algorithm for edge finding in a hidden graph, that uses $\binom{n}{2}$ queries.
\end{restatable}

\begin{restatable}{rLem}{UDstar}
	There exist a deterministic algorithm for edge finding in a hidden star, that uses $n\log n$ queries.
\end{restatable}

\begin{restatable}{rLem}{UDclique}
There exist a deterministic algorithm for edge finding in a hidden clique, that uses $O{n}$ queries.
\end{restatable}

\begin{restatable}{rLem}{UDmatching}\label{thm:UDmatching}
There exist a deterministic algorithm for edge finding in a hidden matching, that uses $\O{n\log^2 n}$ queries.
\end{restatable}

\paragraph{Deterministic Lower Bounds}

\begin{restatable}{rLem}{LDgeneral}
	\textbf{\citet{katona2011finding}:} Any deterministic algorithm for edge finding in a hidden graph, requires $\binom{n}{2}$ queries.
\end{restatable}

\begin{restatable}{rLem}{LDstar}\label{thm:LDstar}
	Any deterministic algorithm for edge finding in a hidden star, requires $n-1$ queries.
\end{restatable}

\begin{restatable}{rLem}{LDclique}\label{thm:LDclique}
	Any deterministic algorithm for edge finding in a hidden clique, requires $\Om{\frac{n}{\log n}}$ queries.
\end{restatable}

\begin{restatable}{rLem}{LDmatching}\label{thm:LDmatching}
	Any deterministic algorithm for edge finding in a hidden matching, requires $\frac{n}{2}$ queries.
\end{restatable}

\subsubsection{Randomized Results}
\paragraph{Randomized Algorithmic Results}
\begin{restatable}{rLem}{URgeneral}\label{thm:URgeneral}
	There exist a randomized algorithm for edge finding in a hidden graph, that uses $\O{n\log^2 n}$ queries.
\end{restatable}

\begin{restatable}{rPro}{URstar}\label{thm:URstar}
	There exist a randomized algorithm for edge finding in a hidden star, that uses $n\log n$ queries.
\end{restatable}

\begin{restatable}{rLem}{URclique}\label{thm:URclique}
	There exist a randomized algorithm for edge finding in a hidden clique, that uses $\O{\log^2 n}$ queries.
\end{restatable}

\begin{restatable}{rLem}{URmatching}\label{thm:URmatching}
	There exist a randomized algorithm for edge finding in a hidden star, that uses $\O{\log^2 n}$ queries.
\end{restatable}

\paragraph{Randomized Lower Bounds}

\begin{restatable}{rLem}{LRstar}\label{thm:LRstar}
	Any randomized algorithm for edge finding in a hidden star, requires $\Om{\frac{n}{\log^2 n}}$ queries.
\end{restatable}

\begin{restatable}{rCor}{LRgeneral}
	Any randomized algorithm for edge finding in a hidden graph, requires $\Om{\frac{n}{\log^2 n}}$ queries.
\end{restatable}

\begin{restatable}{rLem}{LRclique}\label{thm:LRclique}
	Any randomized algorithm for edge finding in a hidden clique, requires $\Om{\log^2 n}$ queries \ron{is it?}.
\end{restatable}

\begin{restatable}{rLem}{LRmatching}\label{thm:LRmatching}
	Any randomized algorithm for edge finding in a hidden matching, requires $\Om{\log^2 n}$ queries.
\end{restatable}

\subsubsection{Generalizations}
rounds, hypergraphs, welfare

\begin{restatable}{rLem}{LDrounds}
	\textbf{\citet{gerbner2016rounds}: }Any deterministic algorithm for edge finding in a hidden graph using $r$ rounds, requires $r\cdot n^{2/r}$ queries.
\end{restatable}

\begin{restatable}{rLem}{UDrounds}
	There exist a deterministic algorithm for edge finding in a hidden graph, that uses $r$ rounds and $\O{r\cdot n^{2/r}}$ queries.
\end{restatable}

\begin{restatable}{rLem}{URrounds}
	There exist a randomized algorithm for edge finding in a hidden graph, that uses $r$ rounds and $\O{r\cdot n^{1/r}\log n}$ queries.
\end{restatable}
}

\subsection{Paper's Structure}
	In Section~\ref{sec:related} we list some of the contexts in the literature that provide motivation to this model and discuss variants of our model.
In Section~\ref{sec:model} we formally define our problem and add notations.
In Section~\ref{sec:general}, we show how to find an edge in general graphs and prove a matching lower bound.
In Section~\ref{sec:special}, we show algorithms and lower bounds for specific families of graphs, including Star, Cliques, and Matching graphs. 
In Section~\ref{sec:rounds}, we extend our results for general graphs to a model with $r$ adaptive rounds of querying.
We conclude with a discussion and open problems in Section~\ref{sec:discussion}.  
	
\section{Context and Related Work}\label{sec:related}
	Our model belongs to a family of models that allow ``disjunction queries''.  I.e., a query that specifies a subset of the input bits and the answer to the query is a bit that says whether there exists some input bit within the queried set is true in the hidden input, i.e., the 
disjunction of the queried bits.

\subsection{Motivations}

Disjunction-query models (or their dual models that return the conjunction of the input bits) pop up in surprisingly diverse situations. 
Here we list several directions studied before. 

\begin{itemize}
	\item {\bf Reducing search to decision problems.} A disjunction query may be viewed as a decision problem that asks whether some solution {\em exists} within some subset of the space. It is natural to attempt using such queries to actually {\em find} a solution. This can be viewed as the combinatorial core of \cite{valiant1985np} for which a matching lower bound appears in \cite{kawachi2012query}.
	
	\item {\bf Group testing.} Already in World War II, the US public health service found that they can reduce the number of Syphilis tests administered by pooling a set of blood samples, testing for an indicator in the pooled sample, i.e., in the disjunction of the samples. Then, only if the pooled test was positive, continuing with identifying the infected individual \cite{du2000combinatorial}. 
	Similar economies of testing were applied in the recent Covid-19 PCR tests (see e.g. \cite{eberhardt2020multi,gollier2020group}) and a large literature has is devoted to these ideas (see e.g. \cite{du2000combinatorial}).
	
	\item {\bf Learning a hidden graph.} There has been significant work on scenarios where there is some "hidden" graph that needs to be learned, where the available ways to acquire information regarding the graph are exactly these types of disjunction queries (e.g., \cite{grebinski1997optimal,grebinski1998reconstructing,alon2004learning,alon2005learning,angluin2008learning,chang2014learning,abasi2019learning}). Variants' goals are to estimate only basic properties of the hidden graph, such as its number of edges \cite{beame2017edge,chen2020nearly}. 
	
	\item {\bf One-sided decision tree complexity.} In \cite{knop2021log} the following measure of complexity of Boolean decision trees was introduced en route to proving certain lifting theorems: the maximum number of 1-answers returned on a path to a leaf. They show that this measure is equivalent, up to logarithmic factors, to the complexity in a model that allows arbitrary disjunctions but counts the length of paths.\footnote{Technically, they looked at the dual case of 0-answers and conjunctions.}
	
	\item {\bf Demand queries.} In \cite{nisan2021demand} a concrete model of computation that uses economically motivated ``demand queries'' was suggested in order to study algorithms for bipartite matching. It was shown that the demand query model is equivalent, up to logarithmic factors, to a model that allows disjunction queries over subsets of edges that are adjacent to a single left vertex.
	
	\item {\bf Maximizing a valuation function under a cardinality constraint.} We ourselves have stumbled upon this diverse literature when we were trying to study the complexity of maximizing the value of an OXS valuation subject to a cardinality constraint. We describe the question and its relation to disjunction queries in appendix~\ref{app:oxs}.
\end{itemize}

\subsection{Query Models}

Different variants of model differ from each other in the allowed set of queries. To some extent, the graph structure imposed on the hidden input
is a convenient way of specifying several natural classes of allowed queries.
In general in these models, each query specifies a subset of the possible ${\binom{n}{2}}$ edges, where this subset must be
from a pre-specified set of allowed queries. 
There are several natural subsets of allowed queries that have been previously studied:

\begin{itemize}
	
	\item {\bf Singleton:} The weakest model in this family only allows querying single edges, and is thus equivalent to a regular Boolean decision tree (e.g., \cite{messmer1999decision}).
	
	\item {\bf Arbitrary:} The strongest model in this family allows queries the disjunction of any set of edges. This model ignores the graph structure and looks at simply a hidden set of bits (e.g., \cite{assadi2020graph}).
	
	\item {\bf Star:} This model allows queries that look at an arbitrary subset of the edges that are connected to a single vertex, i.e, queries of the form $Q_{v,S} = \set{(u,v)\mid u\in S}$.
	This model was used in \cite{wang2012identifying} {and is equivalent to the demand model studied in \cite{nisan2021demand}}.
	
	\item {\bf Independent Set:} This model allows picking a subset of the vertices and asking whether the set of edges connecting this subset of the vertices contains any edge of the hidden graph. I.e., whether the given subset of vertices is an independent set in the hidden graph. That is, queries of the form $Q_{S} = \set{(u,v)\mid u,v\in S}$.
	This model was used in \cite{grebinski1997optimal,grebinski1998reconstructing,alon2004learning,alon2005learning,angluin2008learning,chang2014learning,abasi2019learning} and is the focus of this paper.
	
	\item {\bf Bi-Set:} This model allows picking two sets of vertices $A$ and $B$ and asks whether the hidden graph contains any edge $(u,v)$ 
	where $u \in A$ and $v \in B$. This model contains the previous two models as special cases and was studied in \cite{beame2017edge,assadi2020graph}.

\end{itemize}

For each of these query models, one may naturally study adaptive vs. non-adaptive algorithms and deterministic vs. randomized ones.

\subsection{Tasks}

This paper is focused on the simplest task of identifying an arbitrary edge in the graph in the Independent Set model. 
The same task was previously studied by \cite{katona2011finding,gerbner2016rounds} in the arbitrary queries model and their lower bounds trivially hold for our case. Katona \cite{katona2011finding} showed that in the non-adaptive case, $\binom{n}{2}$ queries must be made in order of guaranteeing the identification of at least one edge. For $r$ rounds, Gerbner and Vizer \cite{gerbner2016rounds} showed that $\Om{r\cdot n^{2/r}}$ queries are needed.

Other papers have considered other tasks such as completely identifying the hidden graph (e.g., \cite{grebinski1997optimal,grebinski1998reconstructing,alon2004learning,alon2005learning,angluin2008learning,chang2014learning,abasi2019learning}), calculating or approximating the number of edges in it \cite{beame2017edge,chen2020nearly}, or computing some function of the graph  \cite{wang2012identifying, assadi2020graph}.
	
\section{Model and Preliminaries}\label{sec:model}	
	The input for our problem is an undirected graph $G=(V,E)$ with $|V|=n$ and $|E|=m$, for which we are looking for a pair of vertices $u,v$ such that $\set{u,v}\in E$. We denote by $d(v)$ the degree of a vertex $v$, which is the number of its neighbors. 
The access to the input graph is given via an {\em Independent Set Query} (IS), also noted as an {\em Edge-Detecting Query}. 
\begin{definition}
	An {\em Independent Set (IS) query}, $Q:2^V\rightarrow\set{0,1}$, receives as input a set $A\subseteq V$, and answers whether there exists an edge $(u,v)\in E$ such that both $u$ and $v$ are in $A$. 
	When $A$ contains an edge, we say the query is positive, and negative otherwise.
\end{definition}
A deterministic non-adaptive IS algorithm is a family $\mathcal{F}$ of subsets of $V$, and a mapping from the set of answers $\set{(A,Q(A))\mid A\in \mathcal{F}}$ to a pair of vertices in $V$, or a failing message\footnote{The message could be informative, e.g., ``the graph is empty", ``the graph has at least $k$ edges". These messages are in use when running several tests in parallel, and the additional information is needed for the final algorithm's decision.}.
A randomized IS algorithm is a distribution over deterministic algorithms.
We say that an algorithm is {\em $r$-adaptive} if it consists of $r$ rounds where the choice of queries for the $i$th round may depend on the answers to queries of earlier rounds, but not on answers of the $i$th round onward.
Most of the algorithms and lower bounds in this work are in the non-adaptive model, i.e., $r=1$. Adaptive algorithms are considered only in Section~\ref{sec:rounds}.

In this work we focus on several families of graphs:

\begin{itemize}
	\item \textbf{Singleton graphs}: Graphs with a single edge, i.e., there are $u,v\in V$ such that $E=\set{\set{u,v}}$.
	\item \textbf{Star}: There exists a vertex $v\in V$ and a set $S\subseteq V\setminus\set{v}$, such that $E=\set{\set{s,v}\mid s\in S}$. We refer to $v$ as the {\em center} of the graph.
	\item \textbf{Clique}: There exists a set $S\subseteq V$ such that $E=\set{\set{u,v}\mid u,v\in S}$.
	\item \textbf{Matching}: Each vertex has at most one neighbor.
	\item \textbf{Overlapping-Product}: There exist two, not necessarily disjoint, sets $A,B\subseteq V$, such that $E=\set{\set{a,b}\mid a\in A\ ,b\in B,\ a\neq b}$.
	\footnote{Overlapping-Product graphs is a family containing both Star and Clique graphs and has implications for welfare maximization as we discuss in Appendix~\ref{app:oxs}.}
\end{itemize}


We denote by $\binom{V}{k}$ the family of subsets of size exactly $k$ contained in $V$, and by $\mathcal{P}(V)$ the family of all subsets of $V$.
We define a distribution $\T{p}{A}$ over subsets of $A$, by taking each item $a\in A$ with independent probability $p$. That is, for $U\sim\T{p}{A}$ and any set $S\subseteq A$, we have that $\Pr\left(U=S\right) = p^{|S|}(1-p)^{|A\setminus S|}$.
We use $\log n$ to denote the logarithm function with base $2$, and $\ln n$ for the natural logarithm.
The following standard asymptotic notions are in use: ${\mathcal{O}},{\Omega}, {\Theta}$, and their equivalents which suppress of polylogarithmic factors of $n$, $\tilde{\mathcal{O}},\tilde{\Omega}, \tilde{\Theta}$.

\section{Finding an Edge in a Hidden Graph}\label{sec:general}
    \subsection{Finding an edge for singleton graphs}
We start by showing that in the case that the hidden graph is known to contain a single edge, the problem is relatively simple and can be solved deterministically using a logarithmic number of queries. 

\begin{restatable}{rPro}{single}\label{pro:single_edge}
	There exists a deterministic IS algorithm that for any graph $G = (V,E)$ with $m$ edges, returns the unique edge $e\in E$, if $m=1$.
	In case that $m=0$ the algorithm returns the message {\em''none"}, and in case that $m>1$ returns the message {\em ''more than one"}, and makes $\O{\log n}$ non-adaptive deterministic queries.
\end{restatable}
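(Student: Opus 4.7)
The plan is to use the probabilistic method to build a non-adaptive family of $\O{\log n}$ IS queries whose answer pattern distinguishes the three cases (``none'', a unique edge together with its identity, and ``more than one''). Fix $k = C \log n$ for a constant $C$ to be chosen, and let $S_1, \ldots, S_k \subseteq V$ be random subsets obtained by including each vertex in each $S_i$ independently with probability $1/2$. For every potential edge $e = \{u, v\}$, define its signature $\chi(e) \in \{0,1\}^k$ by $\chi(e)_i = 1$ iff $\{u,v\} \subseteq S_i$. The algorithm queries all $S_i$, obtains a response pattern $p \in \{0,1\}^k$, and decodes as follows: output ``none'' if $p = 0^k$; output $e^*$ if $p = \chi(e^*)$ for some edge $e^*$; otherwise output ``more than one''. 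The decoder is polynomial time since it just compares $p$ against the $\binom{n}{2}$ precomputed signatures.

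The single structural property I would demand of the family is the \emph{antichain} condition: for every two distinct potential edges $e \neq e'$, $\chi(e) \not\leq \chi(e')$ coordinatewise. I claim this alone makes the decoder correct. Indeed, on a hidden graph with edge set $E$, the observed pattern is $p = \bigvee_{e \in E} \chi(e)$. If $p = \chi(e^*)$, then $\chi(e) \le \chi(e^*)$ for every $e \in E$, and antichain forces each such $e$ to equal $e^*$, yielding $|E| = 1$. The empty-graph case is automatic: antichain rules out $\chi(e) = 0^k$ (else $\chi(e) \le \chi(e')$ for all $e' \ne e$), so $m \ge 1$ guarantees $p \ne 0^k$, and conversely $p = 0^k$ forces $m = 0$.

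The main technical step is the probabilistic analysis. For a fixed distinct ordered pair $(e, e')$ of potential edges, I would lower-bound $\Pr[\chi(e)_i = 1 \wedge \chi(e')_i = 0]$ by a short case split on $|e \cap e'|$: when $|e \cap e'| = 0$ the probability is $(1/4)(3/4) = 3/16$, and when $|e \cap e'| = 1$ it is $(1/2)^3 = 1/8$, so the binding constant is $1/8$. Since the $k$ coordinates are independent, $\Pr[\chi(e) \le \chi(e')] \le (7/8)^k$, and a union bound over the $\O{n^4}$ distinct ordered pairs of potential edges makes the failure probability at most $\O{n^4} \cdot (7/8)^k$, which is $o(1)$ once $C$ is a sufficiently large constant. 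Hence a deterministic family of $\O{\log n}$ sets with the antichain property exists, and plugging it into the decoder above proves the proposition. The only mild obstacle is the overlap case $|e \cap e'| = 1$, where I have to be slightly careful that the within-coordinate event remains lower-bounded by an absolute constant; everything else is bookkeeping.
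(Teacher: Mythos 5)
Your proof is correct and uses essentially the same approach as the paper: the probabilistic method with random $1/2$-biased subsets and a union bound to show a good deterministic family of $\O{\log n}$ sets exists. Your antichain condition on edge signatures is equivalent to the paper's condition that every triplet $(a,b,c)$ is ``separated'' by some sampled set (the paper unions over $\O{n^3}$ triplets rather than $\O{n^4}$ ordered edge pairs, and decodes by vertex elimination rather than signature matching, but these are cosmetic differences); if anything, your signature/antichain framing makes the ``more than one'' case more transparent than the paper's brief discussion of the elimination process.
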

\begin{proof}	
	We start with the assumption that 
	$m=1$ and describe a randomized algorithm for the problem of finding this unique edge.
	The algorithm first samples $t=24\ln n$ random sets of vertices according to $\T{\frac{1}{2}}{V}$.
	For every triplet $a,b,c\in V$, the probability that a ${a,b}\in S$ and ${c}\notin S$ is $\frac{1}{8}$, and the probability that this would not hold for any samples is  
	$$\left( 1-\frac{1}{8}\right)^t < n^{-3}.$$ 
	Assuming the graph's single edge is $(a,b)$, for every sample $S$ it holds that $Q(S)$ is positive if and only if $\set{a,b}\subseteq S$.
	Whenever we have a positive $S$ such that $c\notin S$, we are guaranteed that $c$ is not a part of the single edge. Taking a union over all possible $a,b,c\in V$, gives that with positive probability, using $t$ samples, for {\em any} single edge graph $\set{a,b}$, we can rule out {\em all} other vertices as being part of the edge.
	Hence, there exists some selection of the $t$ queries that is correct for all possible single-edge inputs, and we can find the edge deterministically using $t$ queries.
	The full algorithm needs also to distinguish between the cases where $m=0$ and $m>1$. This is in fact a simple task.
	The existent of at least one edge can be verified by the single query $Q(V)$. If there is more than one edge in the graph, the elimination process would always keep at least two pairs of vertices as candidates for an being an edge and will fail returning a single edge, 
	no matter how many queries the algorithm makes, and this case is also easily identified after the $t$ queries.
	%
\end{proof} 
We later use the above algorithm as a sub-procedure in many of the algorithms in this work, and use the following notion:
\begin{definition}
	Given a subset $S\subseteq V$, we denote by $\singleedge{S}$ the set of queries asked in order to solve the problem described in Proposition~\ref{pro:single_edge}, for the graph $G=\left(S,E\cap\binom{S}{2}\right)$.\\
\end{definition}
For Singleton graphs, i.e., when it is known that $m=1$, we give an explicit construction using $\lceil4\log n\rceil$ queries in Appendix~\ref{sec:exlicit}. In addition, for cases when the single edge has a known endpoint $v$, we describe an algorithm that uses $\lceil2\log n\rceil$ queries for the same problem.
When is it possible to use the later, more efficient algorithm, we use the notation $\singleedgep{S}{v}$. Both algorithms have the same asymptotic behavior like the one in Proposition~\ref{pro:single_edge}, and we may use it without harming the asymptotic correctness of any of our claims.

\subsection{Upper Bound for General graphs}
For general graphs, the trivial deterministic algorithm who queries all pairs and returns any pair with positive answer, is in fact optimal as proven in~\cite{katona2011finding}:

\begin{restatable}{rLem}{LDgeneral}\label{thm:LDgeneral}
	\textbf{\cite{katona2011finding}:} Any deterministic algorithm for edge finding in a hidden graph, requires $\binom{n}{2}$ queries.
\end{restatable}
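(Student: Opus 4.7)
The plan is to invoke Katona's classical argument for the general disjunction-query model, in which a query is an arbitrary subset $Q$ of the $m = \binom{n}{2}$ potential edges, and its answer on a graph $G$ is $1$ iff $Q$ contains some edge of $G$. Our IS-query model is a restriction (only queries of the form $Q = \binom{S}{2}$ for a vertex set $S$ are allowed), so any general-model lower bound transfers directly. My approach to the general-model bound is an iterative ``peeling'' argument. Initialize $G_0 = K_n$ and let $e_1$ be the algorithm's output on $G_0$; by correctness, $e_1 \in E(G_0)$. Form $G_1 = G_0 \setminus \{e_1\}$; since the algorithm's output $e_2$ on $G_1$ must differ from $e_1$ (as $e_1 \notin E(G_1)$), at least one query $Q^{(1)}$ must yield different answers on $G_0$ and $G_1$. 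Because removing an edge can only flip an answer from $1$ to $0$, the flip forces $Q^{(1)} \cap E(G_0) = \{e_1\}$, i.e., $e_1$ is the unique edge of $G_0$ lying inside $Q^{(1)}$.

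Iterating: at step $t$, set $G_t = G_{t-1} \setminus \{e_t\}$, where $e_t$ is the algorithm's output on $G_{t-1}$ (so $e_t \in E(G_{t-1})$, and in particular the peeled edges are pairwise distinct by construction). A flipping query $Q^{(t)}$ between $G_{t-1}$ and $G_t$ must satisfy $Q^{(t)} \cap E(G_{t-1}) = \{e_t\}$, which is equivalent to $Q^{(t)} \subseteq \{e_1, \dots, e_t\}$ together with $e_t \in Q^{(t)}$. The crux is then the distinctness of these flipping queries: if $Q^{(t)} = Q^{(t')}$ for some $t < t'$, then $e_{t'} \in Q^{(t')} = Q^{(t)} \subseteq \{e_1, \dots, e_t\}$, contradicting the fact that $e_{t'}$ is distinct from $e_1, \dots, e_t$. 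The peeling continues for $\binom{n}{2}$ steps, until $G_{\binom{n}{2}}$ is empty and the algorithm's output must switch to the ``none'' message, yielding one final distinct flipping query. Altogether we collect $\binom{n}{2}$ pairwise distinct queries, hence the algorithm makes at least $\binom{n}{2}$ queries.

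The main subtlety, and what prevents the naive ``all signatures must be distinct'' counting (which only yields $\Omega(\log n)$) from sufficing, is the need for a chained peeling rather than pairwise distinction of inputs. The structural observation that each flipping query is forced to be contained in the set of previously peeled edges while containing the current peeled edge is precisely what rules out query reuse across steps and drives the bound all the way up to $\binom{n}{2}$.
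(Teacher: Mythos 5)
Your proof is correct, and the reduction to the general disjunction-query model is the right framing. But your argument for the general-model lower bound is genuinely different from the one in the paper. The paper's proof (Appendix~B.2) is a \emph{static adversary construction}: starting from $K_n$, repeatedly delete any edge that is the unique remaining edge inside some query set, until no query set isolates a single edge; if fewer than $\binom{n}{2}$ queries are made, this peeling terminates with a nonempty edge set $E$, and then $E$ and $E\setminus\{e\}$ produce identical transcripts for every $e \in E$, so the algorithm must err on one of them. Your proof instead follows the algorithm's own outputs along a chain $K_n = G_0 \supset G_1 \supset \cdots \supset G_{\binom{n}{2}}=\emptyset$, showing each transition forces a fresh ``flipping'' query whose restriction to the current graph isolates exactly the edge just peeled, and then uses the containment $Q^{(t)} \subseteq \{e_1,\dots,e_t\}$ with $e_t \in Q^{(t)}$ to rule out reuse. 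Both are peeling arguments on the complete graph, but the paper's peeling is \emph{query-driven} (remove whichever edge some query uniquely covers, independent of what the algorithm reports) while yours is \emph{output-driven} (remove whatever edge the algorithm reports, and deduce the existence of an isolating query from the change in output). The paper's version yields one explicit hard instance; yours directly tallies $\binom{n}{2}$ pairwise distinct queries and makes the ``why not just distinctness of signatures'' point explicit, which is a nice pedagogical bonus. Both are complete and of comparable length.
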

We reprove the claim in Appendix~\ref{sec:LDG}.
On the other hand, we now show that in the randomized model, there exists an algorithm that makes only $\Tt{n}$ queries, and fails with a polynomially small probability.

Our algorithm handles three different cases in parallel. First, if the number of edges in the graph is large, sampling random pairs of vertices finds an edge with high probability. Second, assuming at least one of the vertices $v$ has a high degree, then $v$ and one of its neighbors can be found using an efficient scheme described below. Lastly, if the former two cases do not apply, i.e., there are few edges and they are scattered, we can find an edge by querying few large sets. Whenever receiving a positive answer, there is a good probability there is a single edge in this set. The edge itself can be then identified using the algorithm from Proposition~\ref{pro:single_edge}.

Let $D = \set{2^i\cdot\sqrt{n} \mid i=1,...,\left\lfloor\frac{1}{2}\log n\right\rfloor }$ be a set of estimations for the maximal degree of a vertex in the graph and $c>0$ some constant later to be chosen.
We query the following three families in parallel, each with quasi-linear many queries.
\begin{itemize}
	\item $\mathcal{F}_1$ is a family of $\lceil c\cdot (n+1)\ln n\rceil$ uniformly chosen pairs.
	\item For each vertex $v\in V$, and $d\in D$ we denote by $\mathcal{F}_2^{(v,d)}$ a sample of $\lceil c\cdot 4 e^2 \ln n\rceil$ sets sampled according to $\T{\frac{1}{d}}{V\setminus\set{v}}$ and define $\mathcal{F}_2$ as follows.  
	$$\mathcal{F}_2 = \bigcup_{v\in V,\ d\in D}\bigcup_{S\in \mathcal{F}_2^{(v,d)}} \singleedgep{S}{v}$$
	\item We denote by $\mathcal{F}_3^0$ a sample of $\lceil c\cdot 2 e^2 n\ln n\rceil$ sets sampled according to $\T{\frac{1}{\sqrt{n}}}{V}$, and define $\mathcal{F}_3$ as follows. $$\mathcal{F}_3 = \bigcup_{S\in \mathcal{F}_3^0}\singleedge{S}.$$
\end{itemize}
By Proposition~\ref{pro:single_edge}, we have that $|\singleedge{S}|\leq 24\ln |S|$ and the total number of queries is then $|\mathcal{F}_1\cup\mathcal{F}_2\cup\mathcal{F}_3| \leq c\cdot \alpha n\log^3 n$ for $\alpha=500$.

\begin{lemma} \label{thm:URgeneral}
	For every $c>0$, there is an algorithm which uses $c\cdot \alpha n\log^3 n$ queries and finds an edge with probability at least $1-n^{-c}$.
	\end{lemma}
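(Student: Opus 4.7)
The plan is to split the analysis into three regimes based on the graph and to show that each of the three query families succeeds in its regime with failure probability $n^{-\Omega(c)}$; a union bound over the regimes and their internal case-splits then yields the overall $n^{-c}$ bound. The regimes are: (i) $m \geq n$; (ii) some vertex $v$ has $d(v) \geq 2\sqrt{n}$; and (iii) $m < n$ and every vertex has $d(v) < 2\sqrt{n}$.

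For regime (i), each pair in $\mathcal{F}_1$ is an edge of the hidden graph with probability $m/\binom{n}{2} \geq 2/(n-1)$, and independence of the $\lceil c(n+1)\ln n\rceil$ pairs yields a failure probability of at most $(1 - 2/(n-1))^{c(n+1)\ln n} \leq n^{-2c}$. For regime (ii), we may assume $m<n$ (else regime (i) covers it) and pick $d\in D$ with $d/2\leq d(v)\leq d$. For each $S\sim\T{1/d}{V\setminus\{v\}}$, define $A=\{|S\cap N(v)|=1\}$ and $B=\{G[S]\text{ is edgeless}\}$. Standard binomial estimates give $\Pr[A]\geq (d(v)/d)(1-1/d)^{d-1}\geq 1/(2e)$, while the unconditional expected number of non-$v$-incident edges in $S$ is at most $m/d^2\leq 1/4$. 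A conditional argument using the handshake bound $\sum_{u\in N(v)}d(u)\leq 2m$ to control the expected degree of the unique neighbour then yields $\Pr[A\cap B]\geq p_2$ for a positive constant $p_2$. On $A\cap B$, $G[S\cup\{v\}]$ has exactly one edge (incident to $v$), which $\singleedgep{S}{v}$ recovers by Proposition~\ref{pro:single_edge}. The $\lceil c\cdot 4e^2\ln n\rceil$ independent samples collectively miss with probability $n^{-\Omega(c)}$, and a union bound over the $\leq n\log n$ pairs $(v,d)$ still gives $o(n^{-c})$.

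For regime (iii), fix an edge $e=(u,w)\in E$ and a sample $S\sim\T{1/\sqrt{n}}{V}$ from $\mathcal{F}_3^0$. The event $\{u,w\in S\}$ has probability $1/n$, and conditional on it the remaining vertex inclusions are independent Bernoulli$(1/\sqrt{n})$. Since the indicator of ``no other edge of $G$ lies in $G[S]$'' is a decreasing function of these inclusions, the FKG inequality gives
\[
\Pr[G[S]=\{e\}\mid u,w\in S]\;\geq\;(1-1/\sqrt{n})^{d(u)+d(w)-2}\cdot(1-1/n)^{m-d(u)-d(w)+1}\;\geq\;e^{-4}\cdot e^{-1}=e^{-5},
\]
using $d(u),d(w)<2\sqrt{n}$ and $m<n$. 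Summing over $e\in E$, $\Pr[G[S]\text{ has exactly one edge}]\geq e^{-5}/n$; so with $\lceil c\cdot 2e^2 n\ln n\rceil$ samples all miss with probability $n^{-\Omega(c)}$, and $\singleedge{S}$ then recovers the unique edge by Proposition~\ref{pro:single_edge}.

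The main obstacle is regime (ii): the events $A$ and $B$ are not independent, and the na\"ive bound $\Pr[A\cap B]\geq\Pr[A]-\Pr[B^c]$ can already be negative once constants are plugged in. The fix is to compute the conditional expectation of the non-$v$ edge count directly: conditioning on the unique neighbour $u\in S\cap N(v)$ biases $u$'s other edges to land in $S$, but averaging over $u\in N(v)$ and using the handshake bound $\sum_{u\in N(v)}d(u)\leq 2m$ keeps this conditional expectation below a small constant, after which conditional Markov yields the required lower bound on $\Pr[B\mid A]$.
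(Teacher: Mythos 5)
Your decomposition and query families exactly mirror the paper's (three parallel families, three cases distinguished by edge count and maximum degree), so this is the same high-level approach; the differences lie in how cases (ii) and (iii) are analysed. For case (ii), the paper does \emph{not} use the naive $\Pr[A\cap B]\geq\Pr[A]-\Pr[B^c]$ bound that you correctly flag as failing; instead it chains conditional probabilities: it conditions on the unique neighbour $u\in S\cap N(v)$, lower-bounds the probability that none of $u$'s other neighbours are sampled by $(1-1/d')^{d-1}$ (using that $d(u)\leq d$), and then bounds the remaining edges by a union bound, implicitly invoking monotonicity so the conditioning can only help. Your fix --- averaging $\E[\#\,\text{non-}v\text{ edges}\mid A_u]$ over $u\in N(v)$, using $\Pr[A_u\mid A]=1/d(v)$ and the handshake bound $\sum_{u\in N(v)}d(u)\leq 2m$ to get $\E[\cdot\mid A]\leq 2m/(d(v)\,d)+m/d^2<3/4$, then conditional Markov --- is a valid and arguably cleaner alternative, and it sidesteps a sign slip in the paper's estimate ($(1-1/d')^{d-1}\leq(1-1/d)^{d-1}$, not $\geq$, so the paper's stated constant is slightly off, though the argument survives with a worse constant). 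For case (iii) you use FKG to multiply over edges where the paper uses a conditional chain plus union bound; both work and both give a per-sample success probability of order $1/n$. Two small points: the union bound over the $\lesssim n\log n$ pairs $(v,d)$ in your case (ii) is unnecessary --- only the family matched to the actual high-degree vertex needs to succeed, and keeping it would actually weaken your bound to $n^{1-\Omega(c)}$ for small $c$; and your constants give a $1-n^{-\Omega(c)}$ guarantee rather than exactly $1-n^{-c}$ with $\alpha=500$, which proves Theorem~\ref{thm:General} but technically requires rescaling $c$ and $\alpha$ to match the lemma as literally stated.
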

\begin{proof}
	For a given graph $G = (V,E)$ with $n$ vertices, $m$ edges, and unknown maximal degree $d$, we sample $\mathcal{F}_1,\mathcal{F}_2,\mathcal{F}_3$ as described above and divide our analysis into three disjoint cases:
\begin{enumerate}
	\item $m>\frac{n}{2}$
	\item $m\leq \frac{n}{2}$ and $d\geq \sqrt{n}$
	\item $m\leq \frac{n}{2}$ and $d<\sqrt{n}$
\end{enumerate}
	In the first case, the probability for a query $S\in{\cal F}_1$ to hit an edge is ${m}/{\binom{n}{2}}> 1/(n+1)$, and when counting over all samples, the failing probability is at most $\left(1-\frac{1}{n+1}\right)^{|{\cal F}_1|} < n^{-c}$.
	
	In the second case, if ${\cal F}_1$ finds an edge, the algorithm succeed. Else, we use the answers from ${\cal F}_2$ in order to find a vertex of maximal degree $d$ and one of its neighbors.
	Given $v\in V$ of degree $d$ and $d'\in D$ such that $d'\leq d<2d'$, we bound the probability that $\mathcal{F}_2^{(v,d')}$ contains a set with a single edge. The probability of including exactly one of $v$'s neighbors is $$d\cdot \frac{1}{d'}\left(1-\frac{1}{d'}\right)^{d-1} > d\cdot \frac{1}{2d}\left(1-\frac{1}{d}\right)^{d-1}>\frac{1}{2e}.$$ Conditioning on a selection of a neighbor $u$, the probability of excluding all of $u$'s neighbors is at least, $$\left(1-\frac{1}{d'}\right)^{d-1}>\left(1-\frac{1}{d}\right)^{d-1}>\frac{1}{e}.$$ This is since $d$ is a bound on $u$'s degree and $v$ is a neighbor of $u$. In addition, for any other edge $e\in E$, not touching either $v$ or $u$, the probability for including $e$ in the sample is $\frac{1}{d'^2}\leq\frac{1}{n}$. Using the union bound, with probability of at least $1-\frac{m}{n}\geq\frac{1}{2}$, no such edge is selected.
	That is, for any set $S\in\mathcal{F}_2^{(v,d')}$ we have a probability of at least $\frac{1}{4e^2}$ that $S$ contains a single edge. Using the queries $\singleedge{S}$ the algorithm finds if this is indeed the case, and if so, returns an edge.
	Since there are $c\cdot 4 e^2 \ln n$ samples in $\mathcal{F}_2^{(v,d')}$, the failing probability is at most $n^{-c}$.
	
	In the last case, if ${\cal F}_1$ or ${\cal F}_2$ finds an edge, the algorithm succeed, and else we use the answers from ${\cal F}_3$ in the following way. For any pair $\set{u,v}\in E$, the probability for a set $S\in {\cal F}_3^0$ to include both $u$ and $v$ is $\frac{1}{n}$. 
	Conditioning on the selection of $u$ and $v$, the probability of excluding all of their neighbors from the sampling is at least
	$$\left(1-\frac{1}{\sqrt{n}}\right)^{2\sqrt{n}-2}>\frac{1}{e^2}.$$
	This is since the maximal degree of each of the vertices is smaller than $\sqrt{n}$.
	In addition, for any other edge $e\in E$ not touching either $v$ or $u$, the probability of including $e$ in the sample is $\frac{1}{n}$. Using a union bound, with probability of at least $1-\frac{m}{n}\geq\frac{1}{2}$, no such edge is selected.
	That is, for any $S\in {\cal F}_3^0$ there is a probability of at least $(2e^2\cdot n)^{-1}$ to contain exactly one edge. 
	Using the queries $\singleedge{S}$, the algorithm finds if this is indeed the case, and if so, to return an edge. Since there are $c\cdot 2 e^2 n\ln n$ samples in ${\cal F}_3^0$, the failing probability is at most $n^{-c}$.
	
	In total, the algorithm makes $|\mathcal{F}_1\cup\mathcal{F}_2\cup\mathcal{F}_3| \leq c\cdot \alpha n\log^3 n$ queries and fails to find an edge with probability at most $n^{-c}$. 
\end{proof}
Notice that failing is always of the form of an error message, i.e., the algorithm never return a pair which does not induces an edge.
\subsection{Lower Bound for General graphs}
We now complete the proof of Theorem~\ref{thm:General} with a matching lower bound.

\begin{restatable}{rLem}{LRgeneral}\label{thm:LRgeneral}
	Any randomized algorithm for edge finding that succeeds in a constant probability, requires $\Om{\frac{n}{\log^2 n}}$ queries. Moreover, the lower bound holds for the family of Star graphs.
\end{restatable}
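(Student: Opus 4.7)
The plan is to invoke Yao's minimax principle and exhibit a distribution $\mathcal{D}$ over star graphs on which every deterministic non-adaptive algorithm using $o(n/\log^2 n)$ queries errs with constant probability. A natural candidate for $\mathcal{D}$ is to draw the center $c \in V$ uniformly and then draw the leaf set $L \subseteq V \setminus \{c\}$ in a way that matches the informative scale of queries to the target bound; for instance, include each non-center vertex in $L$ independently with probability $p$, with $p$ tuned so that $1/p$ sits at the right boundary between ``queries that always fire'' and ``queries that never fire''. Since every star is a graph, the bound for stars transfers automatically to the general family, yielding the ``moreover'' clause.

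The first real step is a structural reduction. For a fixed deterministic family $\mathcal{F}=\{A_1,\dots,A_q\}$ and a star with center $c$ and leaves $L$, one has $Q(A_i)=1$ iff $c\in A_i$ and $A_i\cap L\neq\emptyset$. Hence queries with $c\notin A_i$ are uninformative for that center, while queries with $c\in A_i$ act as disjunction probes on the leaf indicator vector restricted to $A_i\setminus\{c\}$. Conditioned on the center being $c$, the algorithm is therefore running the warmup ``find a $1$'' problem on a length-$(n-1)$ Boolean vector using only the subfamily $\mathcal{F}_c=\{A\in\mathcal{F}:c\in A\}$, because any leaf $\ell$ the algorithm outputs must be certified to lie in $L$ by those very answers.

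The core argument then combines the per-center warmup lower bound of $\Omega(\log^2 n)$ from \cite{kawachi2012query} with a double-counting step. If $\mathcal{F}$ succeeds on a $\geq 2/3$ fraction of $\mathcal{D}$, then for a constant fraction of centers $c$ the subfamily $\mathcal{F}_c$ must be a valid find-a-$1$ family on the induced leaf vector; otherwise one exhibits leaf patterns on which the algorithm fails. This forces $|\mathcal{F}_c|\geq \Omega(\log^2 n)$ for those centers, and hence $\sum_{A\in\mathcal{F}}|A|=\sum_c |\mathcal{F}_c|\geq \Omega(n\log^2 n)$. The final step is to argue that only queries of size in a moderate range are genuinely useful: a query with $|A|\gg 1/p$ fires almost deterministically whenever $c\in A$, so it collapses to a mere center-membership probe and cannot disambiguate any specific leaf; a query with $|A|\ll 1/p$ almost never fires and contributes negligibly. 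Coupling this with the previous bound yields $q=\Omega(n/\log^2 n)$.

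The main obstacle is making the informal ``informative regime'' of query sizes rigorous. One cannot simply cap $|A|$ deterministically; instead one must show that replacing each oversized query by a random sub-query of size $\Theta(1/p)$ (or by deleting it entirely) does not change the success probability under $\mathcal{D}$ by more than a negligible amount, so that the counting argument can be restricted to the useful-size regime without loss. This step, together with arguing that queries outside the informative regime cannot jointly ``cheat'' by pooling information across centers, is where the technical weight of the proof lies, and is what dictates the precise choice of $p$ in the construction of $\mathcal{D}$.
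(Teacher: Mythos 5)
Your proposal takes a genuinely different route from the paper's, and it has a concrete gap that I don't think can be patched without essentially abandoning the approach.

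The paper's proof is direct and does not use the Kawachi bound at all. It fixes the distribution you describe (uniform center $v$, each other vertex a leaf with probability $1/\log n$), and then shows that with $o(n/\log^2 n)$ deterministic queries, with probability $1-o(1)$ over $\mathcal{D}$ the entire answer vector degenerates into a pure membership indicator for $v$: every query of size at most $10\log^2 n$ misses $v$ by a union bound (so it returns $0$), and every query of size more than $10\log^2 n$ that does contain $v$ also contains a leaf with probability $1-n^{-10}$ (so it returns $1$ iff $v$ is in it). Hence the output is a deterministic function of $v$ alone, independent of the leaf realization, and the fixed guessed neighbor is a leaf only with probability $1/\log n = o(1)$. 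No per-center counting and no reduction to witness finding are needed.

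The central step in your proposal is the claim that for a constant fraction of centers $c$, the subfamily $\mathcal{F}_c$ must be a valid witness-finding family and hence $|\mathcal{F}_c| = \Omega(\log^2 n)$ via \cite{kawachi2012query}. That invocation is not sound here. The $\Omega(\log^2 n)$ lower bound is a worst-case (equivalently, Yao-hard-distribution) bound, and the hard distribution behind it is a mixture over densities, not a single Bernoulli product measure. Conditioned on the center, your $\mathcal{D}$ induces a Bernoulli$(1/\log n)$ leaf vector with a known, fixed density, and against this distribution the witness-finding problem is strictly easier: sample $O(\log n)$ random sets of size $\Theta(\log n)$ (each contains exactly one $1$ with constant probability) and attach an $O(\log\log n)$-query isolation gadget to each; this succeeds with high probability in $O(\log n \log\log n) = o(\log^2 n)$ queries. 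So the inequality $|\mathcal{F}_c| \geq \Omega(\log^2 n)$ you rely on is simply false, and the double-counting bound $\sum_A |A| \geq \Omega(n\log^2 n)$ does not follow. This gap is independent of the informative-size issue you flag yourself, though that one is also unresolved (and note that at $p = 1/\log n$ a query fires with probability $1-n^{-\Omega(1)}$ only once $|A| = \Omega(\log^2 n)$, not $|A| \gg 1/p = \log n$, so the cutoff you gesture at is off by a $\log n$ factor). Even granting $\sum_A |A| \geq \Omega(n\log^2 n)$, converting this into a lower bound on the \emph{number} of queries requires an upper bound on query sizes that you do not establish. I'd recommend discarding the per-center reduction and instead arguing, as the paper does, that the answer vector is with high probability a function of the center alone.
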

\begin{proof}
	By Yao's principle \cite{Yao77Probabilistic}, it if sufficient to show that there exists a distribution $\D$ over the inputs such that no deterministic algorithm could succeed with high probability when the inputs are distributed according to $\D$. 
	Let $\D$ be the following distribution over graphs: we choose one vertex $v$ uniformly at random. For every $u\in V\setminus \set{v}$, the edge $\set{v,u}$ is in $E$ with probability $\frac{1}{\log n}$.
	
	Let $ALG$ be a deterministic algorithm for the problem who makes $o\left(\frac{n}{\log^2 n}\right)$ queries. We show that $ALG$ fails with high probability where inputs are drawn from $\D$.
	
	We have that any query of size at most $10\log^2 n$, includes $v$ with probability at most $\frac{10\log^2 n}{n}$. Since there are $o(\frac{n}{\log^2 n})$ queries in total, the probability that any query includes $v$ is negligible, and with probability $1-o(1)$ all of them return a negative answer.
	On the other hand, for every query $S$ of size at least $10\log^2 n$, the probability that $S$ does not include any vertex connected to $v$ is bounded by $(1-\frac{1}{\log n})^{10\log^2 n} < n^{-10}$. Since there are $o(\frac{n}{\log^2 n})$ queries, by using the union bound, we get that with probability of at least $1-n^{-9}$ over $\D$, in all of such queries, there is always a vertex $u\in S$ who is a neighbor of $v$. Thus, with probability $1-n^{-9}$, we have that $Q(S)$ is a mere indicator whether or not $v\in S$.
	In case that any of the queries acted unexpectedly, we assume the algorithm succeeds in finding an edge. Otherwise, the algorithm run on two different stars that are both rooted at $v$ will be identical regardless of the edges realization and the algorithm cannot do anything but guessing a random vertex and succeed with probability $\frac{1}{\log n}$.
	In total, we have a probability of $o(1)$ for finding an edge, hence proving a lower bound for the randomized query complexity. 
\end{proof}
The combination of the claims of Lemmas \ref{thm:URgeneral} and \ref{thm:LRgeneral} is exactly the statement of Theorem~\ref{thm:General}. 
	
\section{Finding an Edge in Special Cases}\label{sec:special}	
	In this section, we study families of graphs for which we design more efficient algorithms or prove tighter lower bounds, either in the deterministic or randomized models. All of the results in this Section are summarized in Theorem~\ref{thm:special}.

\subsection{Overlapping-Product graphs}
For this family, since star graphs are a type of overlapping-product graphs, the randomized lower bound of Theorem~\ref{thm:General} holds.
We complete this result by showing a deterministic algorithm with query complexity which is also quasi-linear.
The algorithm is based on a divide-and-conquer technique and it is described in the following proof.

\begin{lemma}\label{thm:UDstar}
	There exists a deterministic algorithm for finding an edge in a Overlapping-Product graphs, that uses $n\lceil \log n \rceil$ queries.
\end{lemma}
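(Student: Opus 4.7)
My plan is to run, for each vertex $v \in V$, an adaptive binary search for a neighbor of $v$, using $\lceil\log n\rceil$ queries per vertex (so $n\lceil\log n\rceil$ queries in total). For each $v$, initialize $U \leftarrow V\setminus\{v\}$; then in each round split $U = U_1 \cup U_2$ into nearly equal halves, query $Q(\{v\}\cup U_1)$, and set $U \leftarrow U_1$ if positive and $U \leftarrow U_2$ otherwise. After $\lceil\log n\rceil$ rounds $|U|=1$; call its element $u_v$.

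Correctness hinges on the following observation tailored to Overlapping-Product graphs, which I would state as a preliminary claim: for every $v \in A\cup B$ and every $U \subseteq V\setminus\{v\}$, we have $Q(\{v\}\cup U) = 1$ iff $v$ has a neighbor in $U$. The forward direction is immediate. For the reverse, suppose the positive answer is caused by an edge $\{u,u'\}$ lying entirely inside $U$ with $u\in A$ and $u'\in B$; WLOG $v\in A$. Then $u'\in B$ and $u'\neq v$ (since $u'\in U$ while $v\notin U$), so $\{v,u'\}$ is itself an edge and $v$ has the neighbor $u'\in U$. Using this equivalence, the invariant ``$U$ contains a neighbor of $v$'' is preserved throughout binary search whenever $v$ has any neighbor at all, so the returned $u_v$ is a genuine neighbor of $v$. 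Since the graph has an edge iff at least one vertex has a neighbor, at least one of the $n$ pairs $\{v,u_v\}$ is a true edge.

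To extract an actual edge from the query answers, the algorithm returns any $\{v, u_v\}$ for which the final binary-search query, namely $Q(\{v\}\cup\{u_v\})$, came back positive---such a query directly certifies that $\{v,u_v\}$ is an edge. The subtle point I expect will need the most care is guaranteeing that among the vertices that do have a neighbor, at least one falls into this ``positively certified'' case. The cleanest way I see is to arrange the last binary-search split so that the terminal query is made on the singleton $\{u_v\}$ (rather than on its alternative), which can be done adaptively at the last level without increasing the count; otherwise one adds at most $n$ verification pair-queries, absorbed into the $n\lceil\log n\rceil$ bound by straightforward bookkeeping.
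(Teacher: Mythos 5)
The central issue with your approach is adaptivity. Your binary search for each vertex $v$ chooses the next query based on the answer to the previous one (``set $U\leftarrow U_1$ if positive and $U\leftarrow U_2$ otherwise''), so it is a $\lceil\log n\rceil$-round algorithm, not a non-adaptive one. Although the lemma statement does not repeat the word ``non-adaptive,'' the entire section (and the theorem it supports) is about the non-adaptive model, and the paper's own proof explicitly notes that ``all queries are asked in a single parallel querying round.'' In the fully adaptive model the $n\lceil\log n\rceil$ bound would be vacuous: an adaptive binary search already solves the problem for arbitrary graphs in $\O{\log n}$ queries (the paper records this in the appendix), so the only content of this lemma is the non-adaptive guarantee, which your scheme does not provide.

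Your preliminary equivalence --- that for $v\in A\cup B$ and $U\subseteq V\setminus\{v\}$ one has $Q(\{v\}\cup U)=1$ iff $v$ has a neighbor in $U$ --- is correct and is essentially the same structural fact the paper exploits. But it cannot be turned into a per-vertex non-adaptive search as you would hope. If you try to de-adaptify the binary search by bit-testing (querying $\{v\}\cup\{u:\text{bit }i\text{ of }u = b\}$ for all $i,b$), the answers need not pinpoint any single neighbor once $v$ has more than one neighbor: for example, with neighbor sets $\{00,11\}$ and $\{01,10\}$ all $2\lceil\log n\rceil$ answers are positive and indistinguishable, and the ``bit-majority'' vertex need not be a neighbor at all. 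The paper avoids this by a divide-and-conquer on the vertex set: recurse on $V_1$ and $V_2$, and in parallel query $\{v\}\cup V_2$ for each $v\in V_1$ and $\{v\}\cup V_1$ for each $v\in V_2$. All queries across all recursion levels are fixed in advance, giving a genuinely non-adaptive $n\lceil\log n\rceil$ scheme; when neither half contains an internal edge, the Overlapping-Product structure lets one pair up any positively answered $v_1\in V_1$ with any positively answered $v_2\in V_2$ and certify $\{v_1,v_2\}$ as an edge. You would need to replace your per-vertex binary search with something like this recursive, batch-parallel construction for the argument to prove the intended statement.
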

\begin{proof}
We build the algorithm recursively, assuming $n$ in the number of vertices in the graph.
For $n = 0,1$, the claim is trivial. 
For $n>1$, the algorithm divides the set $V$ into two disjoint sets $V_1, V_2$, such that $\left||V_1|-|V_2|\right|\leq 1$. Then, by the induction assumption, solve for each set recursively in parallel using $|V_1|\lceil \log |V_1| \rceil + |V_2|\lceil \log |V_2| \rceil \leq n\lceil \log n -1 \rceil$ queries.
If there is an edge that both of its endpoints in the same set, the algorithm finds it.

Otherwise, we check if there is an edge with one endpoint in each set, using the fact that both sets do not contain any edge and the "almost product" property.
We then query each of the sets in  
$$
\mathcal{F}_1 = \set{ \set{v}\cup V_2  \mid v\in V_1} \quad \mathcal{F}_2 = \set{ \set{v}\cup V_1  \mid v\in V_2}.
$$
For any $v\in V_1$ such that $Q(\set{v}\cup V_2)$ is positive, we have that $v$ is an endpoint of an edge since there are no edges in $V_2$.
Assume without loss of generality that $v\in A$, thus there exist a vertex $u\in V_2\cap B$.
For any $w\in V_2$ such that $Q(\set{w}\cup V_1)$ is positive, we have that $w\in B$. Otherwise, the pair $u,w$ induces an edge.
In the same way, we have that all vertices in $V_1$ which are endpoints of edges, belong to $A$
That is, any pair $v_1\in V_1$ and $v_2\in V_2$ such that $Q(\set{v_1}\cup V_2)$ and $Q(\set{v_2}\cup V_1)$ are positive, induced an edge.

Since all queries are asked in a single parallel querying round, the total number of queries is then $n\lceil \log n \rceil$.
\end{proof}

\subsection{Stars}
Lemma~\ref{thm:LRgeneral} shows that any randomized algorithm for finding an edge in a star graph must query at least $\Om{\frac{n}{\log^2 n}}$ queries. We now show a tighter lower bound for the deterministic case, using a reduction from the problem of witness finding over $n-1$ items.
\begin{lemma}
	Any deterministic algorithm for edge finding in a hidden Star graph, requires $n-1$ queries.
\end{lemma}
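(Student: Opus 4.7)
The plan is to prove the bound by reducing star edge-finding to non-adaptive OR-query witness finding on $n-1$ bits, and then supplying a short adversary-chain argument for the witness-finding lower bound itself. Per Section~\ref{sec:model}, all lower bounds in this section of the paper are in the non-adaptive regime (adaptive algorithms appear only in Section~\ref{sec:rounds}); indeed the $n-1$ bound cannot hold adaptively, since a standard binary search over vertex subsets pinpoints a star's center and one of its leaves in $O(\log n)$ adaptive IS-queries. I therefore read the lemma as the non-adaptive deterministic statement and aim for that.

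For the reduction, I fix an arbitrary vertex $v_0 \in V$ and restrict attention to stars centered at $v_0$. Such a star is determined by its leaf-indicator $x \in \set{0,1}^{V\setminus\set{v_0}}$, and on this input an IS-query $A_j$ returns
\[
Q_j(x) \;=\; \mathbf{1}_{v_0 \in A_j}\cdot \bigvee_{u\in A_j\setminus\set{v_0}} x_u,
\]
which is trivially $0$ when $v_0 \notin A_j$ and is an OR-query on $x$ with index set $B_j = A_j \setminus \set{v_0}$ when $v_0 \in A_j$. Since any edge of the hidden graph has the form $\set{v_0,u}$ with $x_u = 1$, the edge output by the IS algorithm immediately supplies a valid witness. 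Thus a non-adaptive deterministic IS algorithm with $t$ queries induces a non-adaptive deterministic witness-finding algorithm on $m = n-1$ bits using $k = |\set{j : v_0 \in A_j}| \le t$ effective OR-queries, and it suffices to show $k \ge m$.

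For the witness-finding bound I would run the following adversary-chain argument against any non-adaptive deterministic OR-query algorithm with queries $B_1,\dots,B_k$ and output function $f$: begin with $x_1 = \mathbf{1}$, and iteratively set $i_j := f(q(x_j))$ and $x_{j+1} := x_j - e_{i_j}$. Correctness forces each $i_j$ to be distinct from $i_1,\dots,i_{j-1}$, so after $m$ steps the chain reaches $x_{m+1} = 0$. For consecutive answer vectors $q(x_j)$ and $q(x_{j+1})$ to differ---which is necessary so that $f$ can return different items---at each step $j$ there must exist a query $T_j$ satisfying $T_j \subseteq \set{i_1,\dots,i_j}$ and $i_j \in T_j$. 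Because $T_j$ contains $i_j$ while every earlier $T_{j'}$ (with $j'<j$) lies within $\set{i_1,\dots,i_{j'}}$ and hence omits $i_j$, the $T_j$'s are pairwise distinct, yielding $k \ge m = n-1$ and hence $t \ge n-1$. The one delicate step is precisely this distinctness check, which the support-containment argument above handles cleanly; the reduction itself is mechanical.
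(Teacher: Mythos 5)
Your reduction is the same one the paper uses: restrict to stars centered at a fixed vertex $v_0$, observe that queries omitting $v_0$ are information-free, and read off each query containing $v_0$ as an OR-query on the leaf-indicator vector $x\in\set{0,1}^{n-1}$, so that a non-adaptive deterministic IS algorithm induces a non-adaptive deterministic witness-finding algorithm on $n-1$ bits. (Your framing ``reduce star edge-finding to witness finding'' is worded in the opposite direction from the paper's ``reduce witness finding to star edge-finding,'' but the content is identical; for a lower bound what matters is exactly the map you give, from an IS algorithm to a witness-finding algorithm.) Your prefatory remark that the bound must be non-adaptive, since adaptive binary search achieves $O(\log n)$, is also correct and consistent with the paper's convention that Sections~\ref{sec:general}--\ref{sec:special} are non-adaptive.

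Where you genuinely diverge is in how the $m$-query lower bound for non-adaptive witness finding on $m$ bits is established. The paper simply invokes Lemma~\ref{thm:LDgeneral} (Katona's $\binom{n}{2}$ bound, whose Appendix~\ref{sec:LDG} proof is a fooling-set adversary: start from the full edge set, repeatedly delete any edge that is the unique edge covered by some query, and at the end exhibit two inputs the algorithm cannot distinguish). You instead give a self-contained chain adversary: descend from $x_1=\mathbf{1}$ by zeroing the output coordinate at each step, observe that consecutive answer vectors must differ, extract from each step a query $T_j$ with $i_j\in T_j\subseteq\set{i_1,\dots,i_j}$, and conclude the $T_j$ are pairwise distinct. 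I checked this carefully and it is correct, including the boundary case $j=m$ (where distinguishability of $x_m$ from $x_{m+1}=\mathbf{0}$ is forced by the requirement to not fabricate a witness) and the pairwise-distinctness step. Your argument is arguably cleaner than routing through the paper's graph-phrased Lemma~\ref{thm:LDgeneral}, whose statement is about $\binom{n}{2}$ edges and which the paper applies to $n-1$ items with a slightly awkward ``$\binom{V_0}{2}=n-1$'' remark; your chain argument proves the $m$-bit bound directly and avoids that detour. So: same reduction, different (but equally valid, and more self-contained) proof of the underlying witness-finding bound.
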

\begin{proof}
	Given an instance for the witness finding problem with $n-1$ items (or equivalently, an input $G_0=(V_0,E_0)$ for the edge finding problem where $\binom{V_0}{2} = n-1$), we show a reduction to the problem of edge finding in a star graph $G=(V,E)$ with $n$ vertices.
	
	denote by $v_1,\ldots,v_n$ the vertices of $V$ and by $u_1,\ldots,u_{n-1}$ the items (or edges) in the original problem.
	We define $E$ to have $v_n$ as the center of the graph and the neighbors are according to the original problem, i.e., $\set{v_i,v_n}\in\E$ if $u_i$ is positive in the original problem, and there are no other edges in $E$.
	
	Every query that does not include $v_n$, does not include an edge and thus always negative and gives no information.
	Every query that includes $v_n$ is positive if and only if $u_i$ is positive in the original problem, thus completing the reduction.	
	Lemma~\ref{thm:LDgeneral} then complete the claim.
\end{proof}

\subsection{Cliques}
The algorithm of Lemma~\ref{thm:UDstar} finds an edge in a clique using $\O{n\log n}$ queries, and can be improved to a linear query complexity, as we show on Appendix~\ref{sec:cliqueapp}.
\begin{restatable}{rLem}{UDclique}\label{thm:UDclique}
	There exist a deterministic algorithm for finding an edge in a hidden clique, that uses $\O{n}$ queries.
\end{restatable}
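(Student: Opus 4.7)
My plan is to arrange the $n$ vertices as the leaves of a balanced binary tree and, for every internal node $v$, run the single-edge detection procedure $\singleedge{T(v)}$ from Proposition~\ref{pro:single_edge} on the set $T(v)$ of leaves in $v$'s subtree. Each sub-routine is itself non-adaptive, so all queries can be prepared in advance and the overall algorithm makes a single round of deterministic queries. Once all answers are collected, the algorithm outputs any pair that some sub-routine reports as an edge; if no sub-routine returns a pair, it reports ``no edge''.

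Correctness hinges on a ``deepest witness'' argument. Let $S$ be the hidden clique with $|S|\ge 2$, and let $v^*$ be any deepest internal node satisfying $|T(v^*)\cap S|\ge 2$; such a node exists since the root satisfies this property. By the minimality of the depth of $v^*$, each of its two children contains at most one element of $S$, so $|T(v^*)\cap S|=2$ and the induced subgraph $G[T(v^*)]$ has exactly one edge. Proposition~\ref{pro:single_edge} then guarantees that $\singleedge{T(v^*)}$ returns precisely that edge. Moreover, the same proposition ensures that any sub-routine returning a pair does so only when its induced subgraph has exactly one edge; in that situation the returned pair must be $T(v)\cap S$, which is a genuine edge of the clique. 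Hence outputting any returned pair is safe, and at least one such pair is produced.

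The query count follows from a standard tree-sum estimate. An internal node at depth $d$ has a subtree of size $n/2^d$ and there are $2^d$ such nodes, so the total number of queries is
\[
\sum_{d=0}^{\log n - 1} 2^d \cdot \O{\log(n/2^d)} \;=\; \O{\sum_{d=0}^{\log n - 1} 2^d (\log n - d)} \;=\; \O{n},
\]
where the last equality uses $\sum_{j\ge 1} j\cdot 2^{-j} = 2$. The main subtle point --- and the place where one has to be a little careful --- is the reliance on the three-valued output of Proposition~\ref{pro:single_edge} (``none''/single edge/``more than one''): without the ability to distinguish ``exactly one edge'' from ``several edges'' in each subtree, one could not trust an arbitrary pair returned by a sub-routine to actually be an edge, and the naive divide-and-conquer of Lemma~\ref{thm:UDstar} would be the best one could hope for.
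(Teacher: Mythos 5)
Your proof is correct and is essentially the paper's first construction: the paper's divide-and-conquer, which recursively solves on two halves and additionally queries $\singleedge{V}$ at each level, unrolls to exactly your non-adaptive scheme of running $\singleedge{T(v)}$ at every node of a balanced binary recursion tree, with the same recurrence $f(n)=2f(n/2)+O(\log n)=O(n)$. Your ``deepest witness'' phrasing is just a cleaner way of stating the paper's inductive observation that if neither half contains two clique vertices then the whole set contains at most one edge, and your remark about the three-valued output of Proposition~\ref{pro:single_edge} correctly identifies why an arbitrary returned pair can be trusted.
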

On the other hand, we now show that in the deterministic model, at least $\frac{n}{2}$ queries are needed in order to find an edge in a clique.

\begin{lemma}
	in the deterministic model, at least $\frac{n}{2}$ queries are needed in order to find an edge in a clique.
\end{lemma}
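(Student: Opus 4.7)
The plan is to give an adversary argument showing that no deterministic non-adaptive algorithm using $q<n/2$ queries can always return an edge of the hidden clique. Fix any deterministic algorithm with query sets $A_1,\dots,A_q$, and restrict attention to the hardest instances, namely $2$-cliques $\{a,b\}$; here the algorithm must output exactly the pair $\{a,b\}$, so the answer vectors must be pairwise distinct across the $\binom{n}{2}$ choices of $\{a,b\}$.

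For each vertex $v\in V$ define its \emph{column} $c_v=\{i\in[q]:v\in A_i\}\subseteq[q]$. For a $2$-clique $\{a,b\}$ the answer on query $A_i$ is $1[\{a,b\}\subseteq A_i]=1[i\in c_a\cap c_b]$, so the answer vector of $\{a,b\}$ is the indicator of $c_a\cap c_b$. Correctness thus reduces to the following combinatorial condition: the map $\{a,b\}\mapsto c_a\cap c_b$ is injective on $\binom{V}{2}$. In particular (i) all columns $c_v$ must be distinct, since $c_a=c_b$ would force $c_a\cap c_x=c_b\cap c_x$ and collapse the pairs $\{a,x\}$ and $\{b,x\}$ for every $x$; and (ii) no two disjoint pairs $\{a,b\}\neq\{c,d\}$ may satisfy $c_a\cap c_b=c_c\cap c_d$.

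The remaining work is to show that such an injective intersection family on $n$ vertices requires $q\ge n/2$. My proposed approach is an adversary/charging argument: pair the vertices into a perfect matching $\{v_1,v_2\},\dots,\{v_{n-1},v_n\}$ and have the adversary commit to one of these $n/2$ pairs as the hidden $2$-clique. Each query $A_i$ can ``resolve'' the identity of the committed pair only by containing the pair as a subset, and a careful accounting (based on how many matching pairs a single query can cover while still producing a distinct signature for each one) yields $q\ge n/2$.

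The main obstacle is the last step: naive information-theoretic counting gives only $q\ge\log\binom{n}{2}=O(\log n)$, so the bound must exploit the specific structure of pairwise intersections. A clean execution would either reduce to a known extremal result about families of sets with pairwise distinct intersections, or give a direct adversarial argument showing that for $q<n/2$ one can always locate two disjoint $2$-cliques with identical answer vectors, on which the algorithm is necessarily wrong.
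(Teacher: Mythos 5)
Your reduction to two-vertex cliques is where the argument breaks down, and it cannot be repaired. Restricting the promise to $2$-cliques is exactly the single-hidden-edge problem, and the paper shows (Proposition~\ref{pro:single_edge}, made explicit in Appendix~\ref{sec:exlicit}) that this is solvable non-adaptively and deterministically with $\O{\log n}$ IS queries. In particular the extremal condition you isolate --- choose column sets $c_v\subseteq[q]$ so that $\set{a,b}\mapsto c_a\cap c_b$ is injective and never hits $\emptyset$ --- is achievable with $q=\lceil 4\log n\rceil$: the ``bit'' sets $F_i^b$ together with their counterparts under $v\mapsto v^{-1}$ over $\mathbb{F}_{2^k}$ in Appendix~\ref{sec:exlicit} do exactly this. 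So the best bound your subfamily can ever yield is $\Theta(\log n)$, and the obstacle you flag in your last paragraph is not a gap in execution: the inequality $q\geq n/2$ is simply false for distinct-intersection families on $n$ columns. ``Two-cliques are the hardest instances'' is the wrong intuition here.

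The $n/2$ hardness comes from varying the clique \emph{size}, which your restriction throws away. The paper's proof is a direct adversary argument over a family $\mathcal{F}$ of $q\leq \frac{n}{2}-1$ queries: repeatedly pick a query whose current size is $1$ or $2$, delete its vertices from every query set and from a candidate set $C$ (initially $V$). Each round removes at most two vertices and permanently empties at least one query, so after at most $q$ rounds $|C|\geq n-2q\geq 2$ and every original query $S$ satisfies $|S\cap C|=0$ or $|S\cap C|\geq 3$. On the hidden clique $C$, the answer to $S$ is the indicator of $|S\cap C|\geq 3$; and since $|S\cap C|\geq 3$ forces $|S\cap(C\setminus\set{v})|\geq 2$, the answer vector is identical when the hidden clique is $C\setminus\set{v}$, for every $v\in C$. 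The algorithm therefore outputs one fixed pair $\set{x,y}$ on this transcript, and the adversary responds with $v\in\set{x,y}\cap C$ (or with $C$ itself if $\set{x,y}\not\subseteq C$), making the output wrong. Note that the two fooling inputs are cliques of consecutive sizes $|C|$ and $|C|-1$; this is precisely where the argument exploits large cliques rather than the $2$-clique subfamily.
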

\begin{proof}
	Given a set of at most $\frac{n}{2}-1$ queries, $\mathcal{F}$, an adversary may build a fooling set in the following way:
	We keep a set $C$ which we initiate to include all the vertices of the graph. 
	While there still exists a set $S\in \mathcal{F}$ of size $1$ or $2$, update $S' = S'\setminus S$ for all $S'\in \mathcal{F}$, and $C = C\setminus S$.
	Since there are at most $\frac{n}{2}-1$ queries, and each set correspond to at most two vertices removal, the process ends with some set $C$ of size at least $2$, and all $S\in\mathcal{F}$ are either empty, or contains at least three vertices.
	According to the graph induced by $C$, the answers to the queries are negative on any set $S$ that is now empty, and positive for all other sets.
	No deterministic algorithm can distinguish between the graphs induced by $C$ and a graphed induced by $C\setminus\set{v}$ for any edge $v\in C$, thus failing on at least one input.
\end{proof}

On the other hand, we show that using randomized queries, an edge of the clique can be found using polylogarithmic many queries and polynomially small failing probability.
Let $D = \set{2^i \mid i=1,...,\left\lceil\log n\right\rceil }$ be a set of estimations for the number of vertices in the hidden clique and $c>0$ some constant.
For each estimation $d\in D$, we sample $\lceil c\cdot 8e \ln n \rceil$ sets according to $\T{\frac{1}{d}}{V}$ and denote this family of samples $\mathcal{F}_d$. We then query
$$
\mathcal{F} = \bigcup_{d\in D}\bigcup_{S\in \mathcal{F}_d} \singleedge{S},
$$
where $|\mathcal{F}|\leq c \alpha \log^3 n$ for $\alpha = 500$.
\begin{lemma}
	For every $c>0$, there is an algorithm which makes $c \alpha \log^3 n$ queries and finds an edge in a Clique graph with probability at least $1-n^{-c}$.
\end{lemma}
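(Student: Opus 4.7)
The plan is to show that the scale parameter $d \in D$ provides a guess of the clique size $k=|C|$ up to a factor of $2$, and that for the correct scale, a random subset from $\T{1/d}{V}$ contains exactly two vertices of $C$ (and hence induces a single-edge graph) with constant probability. Once such a subset is found, Proposition~\ref{pro:single_edge} applied inside $\singleedge{S}$ recovers that edge.

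In more detail, assume $k\ge 2$ (otherwise the graph has no edges). Let $d^{*}=2^{\lfloor \log k\rfloor}$, so $d^{*}\le k<2d^{*}$ and $d^{*}\in D$. For a set $S\sim\T{1/d^{*}}{V}$, define the good event $E_{S}=\{|S\cap C|=2\}$. I would compute
\[
\Pr[E_{S}]=\binom{k}{2}\Bigl(\frac{1}{d^{*}}\Bigr)^{2}\Bigl(1-\frac{1}{d^{*}}\Bigr)^{k-2},
\]
and lower bound each factor using $d^{*}\le k<2d^{*}$: the binomial factor is at least $\tfrac{d^{*}-1}{2d^{*}}\ge \tfrac14$, while the tail factor satisfies $(1-1/d^{*})^{k-2}\ge (1-1/d^{*})^{2d^{*}-2}\ge 1/e^{2}$ (the latter inequality being standard, noting the function is increasing in $d^{*}$ and converges to $e^{-2}$). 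Together this gives $\Pr[E_{S}]\ge 1/(2e^{2})\ge 1/(8e)$.

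Since the $\lceil c\cdot 8e\ln n\rceil$ samples in $\mathcal{F}_{d^{*}}$ are independent, the probability that none satisfies $E_S$ is at most
\[
\Bigl(1-\tfrac{1}{8e}\Bigr)^{c\cdot 8e\ln n}\;\le\; e^{-c\ln n}\;=\;n^{-c}.
\]
Whenever $E_{S}$ holds, the subgraph induced on $S$ consists of precisely one edge — the unique pair of vertices in $S\cap C$ — so by Proposition~\ref{pro:single_edge} the queries $\singleedge{S}$ identify this edge deterministically. The algorithm then simply scans all sampled $S$ and outputs the edge returned by any $\singleedge{S}$ whose response indicates a single-edge subgraph (the other two outcomes from Proposition~\ref{pro:single_edge}, namely ``none'' or ``more than one,'' are ignored and do not lead to false outputs).

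Finally, I would account for the query budget: $|D|\le\lceil\log n\rceil$, each $d\in D$ contributes $\lceil c\cdot 8e\ln n\rceil$ samples, and each sample uses $\le 24\ln n$ queries by Proposition~\ref{pro:single_edge}, so the total is at most $c\,\alpha\log^{3}n$ with $\alpha=500$. The main (very mild) obstacle is purely calibrating the constants in the probability estimate so that $8e$ samples per scale suffice to drive the failure probability down to $n^{-c}$; everything else is a routine assembly of the sampling argument together with the single-edge subroutine.
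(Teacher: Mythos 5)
Your overall strategy matches the paper's: guess the clique size up to a factor of two via the scales in $D$, argue that at the correct scale a random $\T{1/d^*}{V}$ set isolates exactly one edge with probability at least $1/(8e)$, and recover it via $\singleedge{S}$. However, there is a quantitative gap in the probability estimate. You bound the binomial factor by $\binom{k}{2}(1/d^*)^2 \geq \frac{d^*-1}{2d^*} \geq \frac14$ and the tail factor by $(1-1/d^*)^{k-2} \geq e^{-2}$, which together yield only $\Pr[E_S]\ge \frac{1}{4e^2}$, not the $\frac{1}{2e^2}$ you wrote (that is an arithmetic slip). Since $\frac{1}{4e^2}\approx 0.034 < \frac{1}{8e}\approx 0.046$, this does \emph{not} establish the threshold $\frac{1}{8e}$ that your final step $\bigl(1-\frac{1}{8e}\bigr)^{c\cdot 8e\ln n}\le n^{-c}$ relies on; plugging $\frac{1}{4e^2}$ into that calculation gives only $n^{-2c/e}$, which is weaker than $n^{-c}$.

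The underlying issue is that the two factors cannot both be at their minima simultaneously, so bounding them independently is lossy. The tail factor is close to $e^{-2}$ only when $k$ is near $2d^*$, but in that regime $\binom{k}{2}(1/d^*)^2$ is close to $2$, not $\tfrac14$. A clean fix is to observe that, with $p=1/d^*$ fixed, the quantity $g(k)=\binom{k}{2}p^2(1-p)^{k-2}$ is increasing on $k\le 2/p = 2d^*$ (its logarithmic derivative is $\approx \tfrac{2}{k}-p>0$ there), so the minimum over $k\in[d^*,2d^*)$ is attained at $k=d^*$, where $g(d^*)=\frac{d^*-1}{2d^*}(1-1/d^*)^{d^*-2}\ge \tfrac14\cdot e^{-1}=\frac{1}{4e}>\frac{1}{8e}$. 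With that correction the rest of your argument (independence of samples, the failure bound $n^{-c}$, and the $c\alpha\log^3 n$ query accounting) goes through as stated.

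Minor side remark: you assert that $(1-1/d^*)^{2d^*-2}$ is increasing in $d^*$ toward $e^{-2}$; it is in fact decreasing toward $e^{-2}$ (e.g., it equals $1/4$ at $d^*=2$ and $16/81$ at $d^*=3$). The conclusion $\ge e^{-2}$ still holds, but the stated monotonicity is reversed.
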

\begin{proof}
 	Let $d$ be the number of vertices in the clique, and $d'\in D$ is such that $d'\leq d<2d'$. 
 	The algorithm succeed whenever there exist at least one sample in $\mathcal{F}_{d'}$ that has exactly one edge.
 	The probability the exactly two vertices of the clique are selected is $\binom{d}{2}\left(\frac{1}{d'}\right)^2\left(1-\frac{1}{d'}\right)^{d-2}\geq\binom{d}{2}\left(\frac{1}{2d}\right)^2\left(1-\frac{1}{d}\right)^{d-2}>\frac{1}{8e}$.
 	Since $|\mathcal{F}_{d'}|\geq c\cdot 8e \ln n$ we have that the failing probability is at most $n^{-c}$.
\end{proof}

\subsection{Matchings}
For Matching graphs, the algorithm presented in Lemma~\ref{thm:UDstar} fails and we now show a modification of the algorithm for this case.
\begin{lemma}
	There exists a deterministic algorithm for finding an edge in a hidden matchings, that makes $\O{n\log^2 n}$ queries.
\end{lemma}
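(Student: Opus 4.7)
The plan is to adapt the divide-and-conquer scheme of Lemma~\ref{thm:UDstar}, replacing the product-style cross-edge detection (which exploited the overlapping-product structure) with one tailored to the matching property: each vertex has at most one neighbor. First I would partition $V$ into two halves $V_1,V_2$ whose sizes differ by at most one and recursively run the algorithm on each half in parallel; this is the ``internal'' work, costing $2T(n/2)$ queries.

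For the cross-edges, I would add, for every $v\in V_1$, the queries $\singleedgep{V_2\cup\set{v}}{v}$, and symmetrically for every $v\in V_2$. Each such sub-procedure uses only $\O{\log n}$ queries by the known-endpoint variant of Proposition~\ref{pro:single_edge}, so the extra cross-edge cost at the top level is $\O{n\log n}$. Unfolding the recurrence $T(n)\le 2T(n/2)+\O{n\log n}$ yields $T(n)=\O{n\log^2 n}$, as claimed.

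For correctness, I would argue as follows. If the recursion outputs an edge inside $V_1$ or $V_2$, return it. Otherwise both halves are independent sets in $G$, so every edge of the matching is a cross-edge. Pick any $v\in V_1$ for which the query $V_2\cup\set{v}$ is positive (which exists unless $E=\emptyset$). Because $V_2$ is edge-free and $v$ has at most one neighbor in $V_2$ by the matching assumption, the set $V_2\cup\set{v}$ contains \emph{exactly one} edge, so the answers to the queries in $\singleedgep{V_2\cup\set{v}}{v}$ correctly identify the unique $u\in V_2$ with $\set{u,v}\in E$, and the algorithm outputs $\set{u,v}$.

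The main obstacle is that the cross-edge queries must be fixed non-adaptively, before we learn whether the recursion will find internal edges, so we cannot rely on the ``$V_2$ is edge-free'' promise at query time. The resolution is that we only \emph{interpret} the output of $\singleedgep{V_2\cup\set{v}}{v}$ in the branch where the recursion has already certified that $V_2$ is edge-free; in the complementary branch we simply return the internal edge found by the recursion and discard the (potentially misleading) cross-edge answers. The matching condition on $G$ then supplies precisely the single-edge promise needed by $\singleedgep{\cdot}{v}$, which is exactly the ingredient missing when the Lemma~\ref{thm:UDstar} algorithm is used verbatim.
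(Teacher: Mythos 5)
Your proposal is correct and follows essentially the same divide-and-conquer strategy as the paper, with the cross-edges between $V_1$ and $V_2$ resolved via the known-endpoint single-edge subroutine $\singleedgep{\cdot}{v}$, one invocation per vertex, and the same recurrence giving $\O{n\log^2 n}$. The only cosmetic difference is that you add the symmetric queries from the $V_2$ side (which are redundant, since a cross-edge already has one endpoint in $V_1$), and you spell out more explicitly than the paper why the non-adaptively issued cross-edge queries are safe to interpret only in the branch where the recursion has certified both halves edge-free.
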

\begin{proof}
	We build the algorithm recursively, assuming $n$ in the number of vertices in the graph.
	For $n = 0,1$, the claim is trivial. 
	For $n>1$, the algorithm divides the set $V$ into two disjoint sets $V_1, V_2$, such that $\left||V_1|-|V_2|\right|\leq 1$. Then, by the induction assumption, solves for each set recursively.
	If there is an edge that both of its endpoints in the same set, the algorithm finds it.
	Otherwise, we check if there is an edge with one endpoint in each set, using the fact the both sets do not contain any edge.
	We query for each $v\in V_1$ if it has a match in $V_2$, and if so, what it is. That is,
	$$
	\mathcal{F} = \bigcup_{v\in V_1} \singleedgep{V_2}{v},
	$$
	Let $f(n)$ be the total number of queries in the algorithm. We have that
	$$f(n) = f\left(\left\lceil\frac{n}{2}\right\rceil\right)+f\left(\left\lfloor\frac{n}{2}\right\rfloor\right)+\left\lfloor\frac{n}{2}\right\rfloor\cdot 24\ln \left\lceil\frac{n}{2}\right\rceil $$
	for $n>1$ and $f(1)=1$ trivially.
	Using the generalized Master theorem (see, e.g., \cite{cormen2009introduction}) we have that $f(n) = \O{n\log^2 n}$, when all of the queries are asked in a single parallel querying round.
\end{proof}

We show a deterministic lower bound of $\frac{n}{2}$ queries using reduction from the witness finding problem.
The reduction also proves that any randomized algorithm that succeeds with constant probability, makes $\Om{\log^2 n}$ queries.
\begin{lemma}\label{lem:Lmatching}
	Any deterministic algorithm for edge finding in a hidden matching, requires $\frac{n}{2}$ queries. Moreover, Any randomized algorithm for this problem who succeeds in finding an edge with constant probability, requires $\Om{\log ^2 n}$ queries.
\end{lemma}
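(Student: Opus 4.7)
The plan is to reduce from the witness-finding problem on $k := \lfloor n/2 \rfloor$ items, for which both the tight deterministic lower bound of $k$ and the randomized lower bound of $\Om{\log^2 k}$ (due to \cite{kawachi2012query}) are already discussed in the warmup. Since $\log^2 k = \Th{\log^2 n}$, both bounds of the lemma will follow at once from a single reduction.

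The reduction is as follows. Given an instance of witness finding on items $u_1,\ldots,u_k$ with hidden vector $x\in\set{0,1}^k$, I pair up the $n$ vertices into $k$ disjoint pairs $e_i=\set{v_{2i-1},v_{2i}}$ and consider the matching $M(x) = \set{e_i \mid x_i = 1}$. This is a valid matching regardless of $x$, so the family of instances we use lies entirely inside the Matching family. To simulate a matching-edge-finding algorithm $\A$ on $M(x)$, I translate each query $S\subseteq V$ that $\A$ makes into the witness-finding query $Q_S := \set{i \mid v_{2i-1},v_{2i}\in S}$: because every edge of $M(x)$ is one of the $e_i$, the IS-query $Q(S)$ is positive iff some $e_i$ with $x_i=1$ is contained in $S$, i.e.\ iff some $i\in Q_S$ has $x_i=1$ --- exactly the answer to $Q_S$. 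Finally, if $\A$ outputs the edge $e_i=\set{v_{2i-1},v_{2i}}$ of $M(x)$, I output $i$ as a witness; by correctness of $\A$ this $i$ satisfies $x_i=1$.

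Thus any (deterministic or randomized, non-adaptive) algorithm that finds an edge in a hidden matching on $n$ vertices using $T$ queries yields an algorithm for witness finding on $k=\lfloor n/2\rfloor$ items using $T$ queries with the same success guarantee. Plugging in the deterministic lower bound of $k$ queries gives at least $\lfloor n/2\rfloor = \frac{n}{2}$ queries in the deterministic model, and plugging in the randomized lower bound $\Om{\log^2 k} = \Om{\log^2 n}$ of \cite{kawachi2012query} gives the randomized claim.

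The only delicate point is the faithfulness of the simulation of queries: one must notice that because $M(x)$ contains only edges of the form $e_i$, an arbitrary vertex subset $S$ contributes an edge to the query iff it fully contains some paired block $e_i$ whose $x_i=1$, so the single-bit answer depends only on the set $Q_S$ of fully covered blocks. This is what makes the reduction preserve query complexity exactly (rather than up to some loss). I do not expect any real technical obstacle; the core of the argument is simply the observation that the matching family is rich enough to embed arbitrary bit-vector instances of the underlying warmup problem.
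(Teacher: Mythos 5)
Your proposal is correct and is essentially the same reduction the paper uses: pair the $n$ vertices into $n/2$ disjoint potential edges, place edge $e_i$ in the hidden matching iff bit $x_i$ of the witness-finding instance is $1$, observe that an IS-query $S$ answers exactly as the disjunction query over the blocks fully contained in $S$, and then invoke the $n/2$ deterministic and $\Om{\log^2 n}$ randomized lower bounds for witness finding. The paper phrases it as ``given a witness-finding instance, build a matching instance'' while you phrase it as ``simulate the matching algorithm to solve witness finding,'' but these are the same argument.
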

\begin{proof}
	Given an instance for the witness finding problem with $\frac{n}{2}$ items (or equivalently, an input $G_0=(V_0,E_0)$ for the edge finding problem where $\binom{V_0}{2} = \frac{n}{2}$), we show a reduction to the problem of edge finding in a Matching graph $G=(V,E)$ with $n$ vertices.
	
	denote by $v_1,\ldots,v_{\frac{n}{2}},u_1,\ldots,u_{\frac{n}{2}}$ the vertices of $V$ and by $w_1,\ldots,w_{\frac{n}{2}}$ the items (or edges) in the original problem.
	We define $E$ to have an edge $\set{v_i,u_i}$ in $E$ if $w_i$ is positive in the original problem, and there are no other edges in $E$.
	Given a query $S$, the answer is positive if and only if there exists $i$ such that both $v_i$ and $u_i$ belong to $S$ and $w_i$ is positive.
	That is, any query for this graph can be asked on $G_0=(V_0,E_0)$ and the reduction is complete.
	Lemma~\ref{thm:LDgeneral} then show our claim for the deterministic model.
	For the randomized model, the lower bound of $\Om{\log^2 n}$ for witness finding is due to Kawachi et. al. \cite{kawachi2012query}.
\end{proof}

In the randomized model, we take a similar approach to the one taken for Clique graphs.
Let $D = \set{2^i \mid i=1,...,\left\lfloor\log n\right\rfloor }$ be a set of estimations for the number of edges in the hidden matching and $c>0$ some constant.
For each estimation $d\in D$, we sample $ \lceil c\cdot 2e \ln n \rceil$ sets according to $\T{\frac{1}{\sqrt{d}}}{V}$ and denote this family of samples $\mathcal{F}_d$. We then query
$$
\mathcal{F} = \bigcup_{d\in D}\bigcup_{S\in \mathcal{F}_d} \singleedge{S},
$$
where $|\mathcal{F}|\leq c \alpha \log^3 n$ for $\alpha = 500$.

\begin{lemma}
 For every $c>0$, there is an algorithm which makes $c \alpha \log^2 n$ queries and finds an edge in a Matchings graph with probability at least $1-n^{-c}$.
\end{lemma}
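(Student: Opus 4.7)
The plan is to repeat the template established for cliques, now leveraging the fact that the edges of a matching are pairwise vertex-disjoint. Let $d$ denote the unknown number of edges in the matching, and let $d' \in D$ be the (unique) power of two with $d' \leq d < 2d'$. For a sample $S \sim \T{1/\sqrt{d'}}{V}$, a fixed matching edge $e$ satisfies $e \subseteq S$ with probability exactly $1/d'$, and, crucially, because matching edges share no vertices these $d$ events are mutually independent. Hence the number $|M \cap S|$ of matching edges contained in $S$ is distributed as $\mathrm{Bin}(d, 1/d')$, with mean $d/d' \in [1,2)$.

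The key estimate is that a single sample contains exactly one matching edge with constant probability. Indeed,
$$\Pr[|M \cap S| = 1] \;=\; d \cdot \frac{1}{d'} \cdot \left(1 - \frac{1}{d'}\right)^{d-1}.$$
A short calculation, combining $d/d' \geq 1$ with the standard inequality $(1 - 1/x)^{x-1} \geq 1/e$ for $x \geq 2$, shows that this quantity is bounded below by $1/(2e)$ for every admissible pair $(d,d')$. Since $|\mathcal{F}_{d'}| = \lceil c \cdot 2e \ln n \rceil$, the probability that every sample at scale $d'$ misses the ``exactly one edge'' event is at most $(1 - 1/(2e))^{c \cdot 2e \ln n} \leq n^{-c}$.

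Whenever some $S \in \mathcal{F}_{d'}$ contains exactly one edge, Proposition~\ref{pro:single_edge} applied through $\singleedge{S}$ returns that edge; on samples with zero or at least two edges the sub-procedure returns an informative error message that the outer algorithm discards. The algorithm scans the answers to $\bigcup_{d'' \in D} \bigcup_{S \in \mathcal{F}_{d''}} \singleedge{S}$ and outputs the first edge identified. Failure occurs only when, for the correct scale $d'$, no sample in $\mathcal{F}_{d'}$ has exactly one edge, which happens with probability at most $n^{-c}$.

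The main obstacle is the uniform constant-probability estimate: one must verify $(d/d') \cdot (1-1/d')^{d-1} \geq 1/(2e)$ uniformly for $d' \leq d < 2d'$ (the worst case is roughly $d \approx 2d'$ for large $d'$, which is where the factor $2e$ in the sample size comes from). The calculation is elementary, but it is the only step where the matching structure is essential; for general graphs the events $\{e_i \subseteq S\}$ are correlated, which is why the other sections required different scalings.
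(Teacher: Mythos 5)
Your proof is correct and follows essentially the same template as the paper's: fix the scale $d'$ with $d' \leq d < 2d'$, lower-bound by a constant the probability that a sample drawn from $\T{1/\sqrt{d'}}{V}$ contains exactly one matching edge, and amplify over the $\lceil c\cdot 2e\ln n\rceil$ samples at that scale. Your added observations---that vertex-disjointness of matching edges makes the inclusion events independent so that $|M\cap S|$ is binomial, and that $\singleedge{S}$ simply reports an informative failure on samples with zero or several edges---are helpful clarifications but do not change the argument.
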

\begin{proof}
	Let $d$ be the number of edges in $E$, and $d'\in D$ is such that $d'\leq d<2d'$. 
 	the algorithm succeeds if at least one of the samples in $\mathcal{F}_{d'}$ has exactly one edge.
 	$$d\left(\frac{1}{\sqrt{d'}}\right)^2\left(1-\left(\frac{1}{\sqrt{d'}}\right)^2\right)^{d-1}\geq d\cdot\frac{1}{2d}\left(1-\frac{1}{d}\right)^{d-1}>\frac{1}{2e}.$$
 	Since $|\mathcal{F}_{d'}|\geq c\cdot 2e \ln n$ we have that the failing probability is at most $n^{-c}$.
\end{proof}
Note that by selecting $c = 1/\ln n$, there is constant probability for the algorithm to succeed while making only $\O{\log^2 n}$ queries, which match the lower bound of Lemma~\ref{lem:Lmatching}.
This conclude the proof of Theorem~\ref{thm:special}.
		
\section{Edge Finding Using $r$ Rounds}\label{sec:rounds}	
	In this section we consider an adaptive model of $r$ rounds, when at each round a family of queries are asked and answered in parallel and the selection of the queries depends only on answers from previous rounds and possibly randomization.
When limited to $r$ adaptive querying rounds, we use the non-adaptive algorithms from Section~\ref{sec:general} in order to construct algorithms with improved query complexity.
We start by stating a lower bound by \cite{gerbner2016rounds} for the deterministic case. This bound holds even for arbitrary queries on any subset of edges. 
\begin{restatable}[\cite{gerbner2016rounds}]{rLem}{LDrounds}\label{thm:LDrounds}
	Any deterministic algorithm for edge finding which uses at most $r$ rounds, requires $\Om{r\cdot n^{2/r}}$ queries.
\end{restatable}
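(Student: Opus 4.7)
The plan is to proceed by induction on $r$ with base case $r=1$ given by Lemma~\ref{thm:LDgeneral}, establishing the recurrence $f(n,r) \geq q + f(c\cdot n/\sqrt{q},\, r-1)$ for every $q\geq 1$, where $f(n,r)$ denotes the minimum total number of queries that any $r$-round deterministic algorithm requires to find an edge in a hidden $n$-vertex graph. Unrolling and optimizing via AM--GM (the minimum of $\sum_i q_i$ subject to $\prod_i q_i \geq n^2$ is attained at $q_i = n^{2/r}$ for each $i$) then yields $f(n,r) = \Om{r\cdot n^{2/r}}$.

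The inductive step is the heart of the proof. Given the $q$ round-1 queries $S_1,\ldots,S_q$, the adversary maintains the invariant that the hidden graph is a clique $K_{V^*}$ on some yet-undetermined subset $V^* \subseteq V$ with $|V^*| \geq 3$, answering each query as $a_j = [\,|S_j \cap V^*| \geq 2\,]$ (Iverson bracket). The key claim is the existence of an answer vector $\vec{a}$ and a vertex set $V' \subseteq V$ of size $\Om{n/\sqrt{q}}$ such that every $V^* \subseteq V'$ with $|V^*| \geq 3$ is consistent with $\vec{a}$; equivalently, for each $j$, either $V' \subseteq S_j$ (paired with $a_j=1$) or $|V' \cap S_j| \leq 1$ (paired with $a_j=0$). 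Once such a $V'$ is found, the remaining $r-1$ rounds face an edge-finding problem on $V'$, and at the end the adversary can pick any $V^* \subseteq V'$ of size at least $3$ avoiding the algorithm's output pair.

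To construct $V'$, I would use a threshold dichotomy on query size. I answer ``yes'' to queries with $|S_j| \geq (1 - 1/(2q))\,n$ and restrict $V'$ to the intersection $T$ of all ``yes'' queries, which has size $|T| \geq n/2$ by a union bound. I answer ``no'' to the remaining ``small'' queries and select $V'$ as a random $p$-thin subset of $T$ with $p = \Theta(1/\sqrt{q})$; a first-moment calculation shows the expected number of small queries $S_j$ with $|V' \cap S_j| \geq 2$ is $\O{p^2 \sum_j \binom{|S_j|}{2}} = \O{1}$ for an appropriate constant, so after deleting at most one vertex per bad query we retain $|V'| = \Om{n/\sqrt{q}}$. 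The main obstacle I anticipate is handling the mixed regime of intermediate-size queries, where neither branch of the dichotomy is individually efficient; I expect that either a careful tuning of the threshold or an appeal to the Lov\'asz local lemma, in the spirit of \cite{gerbner2016rounds}, will suffice to close the gap and certify the $\Om{n/\sqrt{q}}$ bound uniformly over all query profiles.
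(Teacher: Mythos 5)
The obstacle you flag at the end is a genuine gap, and it cannot be closed by tuning the threshold or invoking the local lemma: the structural claim behind your recurrence is false. Consider $q$ ``cyclic window'' queries on $V=\set{0,\dots,n-1}$, namely $S_j=\set{\lfloor jn/q\rfloor,\ \lfloor jn/q\rfloor+1,\ \dots,\ \lfloor jn/q\rfloor+\lfloor n/2\rfloor}$ with indices taken mod $n$, for $j=0,\dots,q-1$, each of size about $n/2$. For any candidate $V'$ of size $k$, set $w_j:=|V'\cap S_j|$. Each vertex lies in about $q/2$ windows, so the cyclic sequence $(w_j)$ averages about $k/2$, while consecutive windows differ by at most $\lceil n/q\rceil$ elements on each side, so $|w_{j+1}-w_j|\le \lceil n/q\rceil$. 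If every $j$ satisfies the all-or-one condition (so $w_j\le 1$ or $w_j\ge k-1$; note that $a_j=1$ really only forces $|V'\setminus S_j|\le 1$, a weaker requirement than your $V'\subseteq S_j$, and I use the weaker version to make the counterexample stronger), and $\lceil n/q\rceil< k-2$, then $(w_j)$ can never cross the band $\set{2,\dots,k-2}$, hence lies wholly in $\set{0,1}$ or wholly in $\set{k-1,k}$, and in either case its average cannot equal $k/2$ once $k\ge 3$. Thus \emph{every} admissible $V'$ has $|V'|=\O{n/q}$, far short of the $\Om{n/\sqrt q}$ your recurrence needs, so $f(n,r)\ge q+f(cn/\sqrt q,r-1)$ is simply unattainable for the clique-on-$V'$ adversary. (Your first-moment step is also off by a polynomial factor: with threshold $(1-1/(2q))n$ the ``small'' queries can have size close to $n$, so $p^2\sum_j\binom{|S_j|}{2}$ is $\Theta(n^2)$, not $\O{1}$; shrinking the threshold to fix this destroys the union bound for $T$.)

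The deeper issue is that after one round of IS queries the family of consistent inputs generally does not look like ``all cliques on some vertex subset $V'$''---in the cyclic-window example the survivors are the near-antipodal pairs, which form no clique on any sub-vertex-set---so recursing on a smaller $V'$ is the wrong reduction. Note also that the paper itself does not prove this lemma; it is imported from Gerbner and Vizer \cite{gerbner2016rounds}, whose bound holds even for arbitrary edge-subset queries. The argument one wants works directly over the $m=\binom n2$ potential edges in that more general model: a round of $q$ queries can be answered so as to keep $\Om{m/q}$ candidate edges alive, giving $g(m,r)\ge q+g(cm/q,r-1)$ with $g(m,1)=m$, and your AM--GM unrolling then yields $\Om{r m^{1/r}}=\Om{r n^{2/r}}$. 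Your overall template (induction plus AM--GM) is the right shape; the invariant to carry is a set of surviving edges, not a surviving clique.
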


\begin{definition}
	For a graph $G = (V,E)$ and a disjoint partition of the vertices $\set{V_i}$, we define the {\em partition graph}, $H = (\mathcal{V},\mathcal{E})$ where $\mathcal{V} = \set{V_i}$ and $$\mathcal{E} = \set{ \set{U,V} \mid U,V\in \mathcal{V}, \exists a,b\in U\cup V,\ s.t.\ \set{a,b}\in E }.$$
\end{definition}

\rounds*
\begin{proof}	
	In the deterministic model, assume by induction that for any $r'<r$ we have that the total number of queries is at most $10 r'\cdot n^{2/r'}$. The case $r=1$ is handled in Theorem~\ref{thm:General}.
	For $r>1$, partition $V$ arbitrarily into $k=\left\lceil 2\cdot n^{1/r} \right\rceil+4$ sets $V_1,\ldots,V_{k}$ of size at most $t = \left\lfloor \frac{1}{2}\cdot n^{1-1/r} \right\rfloor$ each \footnote{We have that $kt\geq n$ for $r>1$.}.
	In the partition graph, we query all of the queries $Q(\set{V_i,V_j})$ (i.e., $Q(V_i\cup V_j)$ in the original graph), for $i,j\in[k]$. If all of the answers are negative, the algorithm terminates, otherwise it continues with any pair of two positive sets, iteratively.
	By the induction assumption, the total number of queries $f(n)$ is then,
	\begin{eqnarray*}
	 f(n) & \leq &\binom{k}{2} + 10(r-1){(2t)}^{2/(r-1)}\\
	 	& \leq &\left(2\cdot n^{1/r}+5\right)\left(2\cdot n^{1/r}+4\right)/2 + 3(r-1){\left(n^{1-1/r}\right)}^{2/(r-1)}\\
		& \leq & 2\cdot n^{2/r} +  9\cdot n^{1/r} + 10 + 10(r-1)\cdot n^{2/r}\\
		& \leq & 10r\cdot n^{2/r},
	\end{eqnarray*}
	where the last inequality holds whenever $r<\log n$. For $r\geq \log n$, the problem is solvable using binary search algorithm that makes $6$ queries at each round (see Appendix~\ref{sec:binary}).
	
	In the randomized model, assume that for any $r'<r$ we have that the total number of queries is at most $2000 c r'\cdot n^{1/r'}\ln^3 n$ and the error probability is at most $r\cdot n^{-c/r}$. The case $r=1$ is handled Theorem~\ref{thm:General}.
	For $r>1$, partition $V$ arbitrarily into $k=\left\lceil 2\cdot n^{1/r} \right\rceil+4$ sets $V_1,\ldots,V_{k}$ of size at most $t = \left\lfloor \frac{1}{2}\cdot n^{1-1/r} \right\rfloor$ each.
	In the partition graph, we run the single-round randomized algorithm from Theorem~\ref{thm:General} with error parameter of $c$.
	The running on the partition set is positive whenever there is an edge in the union of two sets. We continue with this union iteratively, using $r-1$ rounds.
	By the induction assumption the total number of queries $f(n)$ is then,
	\begin{eqnarray*}
		f(n) & \leq & 500 c k \ln^3 k + 2000c(r-1){(2t)}^{1/(r-1)}\ln^3 2t\\
			& \leq &  500 c (2n^{1/r} + 5) \ln^3 n + 2000c(r-1){n}^{1/r}\ln^3 n\\
			&\leq & 2000cr{n}^{1/r}\ln^3 n
	\end{eqnarray*}
	where the second inequality holds whenever $r<\log n$. For $r\geq \log n$, the problem is solvable with standard binary search algorithm, using two queries at each round.
	
	The algorithm errs with probability at most $k^{-c} + (r-1)\cdot (2t)^{-c} \leq r\cdot n^{-c/r}$.
	Taking any $c>2r$ gives a polynomially small error probability, with the required query complexity for $r<\log n$. As mentioned, for $r\geq \log n$ the claim hold deterministically.
		
\end{proof}

\section{Discussion and Open Problems}\label{sec:discussion}	
	In this work we showed a randomized algorithm for edge finding in the non-adaptive IS model and prove the tightness of its query complexity. We showed that while in the deterministic model the optimal algorithm requires $\binom{n}{2}$ queries, in the randomized model $\Tt{n}$ queries are sufficient (and needed) in order to find an edge.
In addition, we analyzed the query complexity for Stars, Clique and Matching graphs, for both the randomized and deterministic model.
Lastly, for general graphs, we showed a trade-off between the query complexity and the number of adaptive rounds $r$ made by the algorithm. We show two algorithms with $\O{rn^{2/r}}$ and $\Ot{rn^{1/r}}$ sample complexity for the deterministic and randomized models, respectively.

Two future directions are:
\begin{itemize}
	\item Closing the polylogarithmic gaps between lower and upper bounds in some of our results.
	\item Extending our results for edge finding in a graph to the task of hyperedge finding in a hypergraphs, with the appropriate query model as discussed in \cite{angluin2006learning,abasi2018non}. This extension has implications for welfare maximization as discussed in Appendix~\ref{app:oxs}.
	\item Finding the optimal adaptive randomized bound is still an open question. It would be interesting to see either a $\Omega{rn^{1/r}}$ lower bound or an improved randomized algorithm that takes further advantage of adaptivity.
\end{itemize}



	\bibliography{main} 
	\bibliographystyle{plainnat}			

\appendix
	\section{Maximization under cardinality constraint}\label{app:oxs}
We study the sample complexity maximizing the value of a valuation $f:2^{[n]}\rightarrow \R$ subject to a cardinality constraint $k$, i.e., to find $\arg\max_{|S|=k} f(F)$.
For simplicity, we assume (with loss of generality) that the $f$ is capped in $k$ items, i.e, the value for a bundle $S$ of size larger than $k$ is exactly $$f(S) = \max_{\begin{matrix} T\subset S,~ |T|=k \end{matrix}} f(T).$$
Observe that under this assumption, the value of the grand bundle is the same as the value of an optimal bundle of size $k$, and every value query for a bundle $S$ hinged whether or not the set contains an optimal solution.
We identify every optimal solution with an hyper-edge of dimension $k$ in an hyper-graph with $n$ vertices.
For $k=2$, we receive a graph $G$ where any edge in the graph is a feasible solution of the maximization problem, and the connection to the edge-finding-problem discussed in this paper is clear.

We now focus on the following family of valuations called OXS, and ask for the number of non-adaptive value queries needed in order of solving the maximization problem.
\begin{definition}
	A function $f:2^{[n]}\rightarrow \R$ is an {\it assignment function (OXS)} if $f$ is the convolution of $r$ unit-demand functions $u_1, \ldots, u_r$:
	$f(S)  = \bigvee_{i \in [r]} u_i(S) := \max_{\cupdot_{i\in [r]}{S_i} =S}\sum_{i \in [r]}{u_i(S_i)}$,
	where the sets $S_1, \ldots S_r$ are a partition of $S$.
\end{definition}

When limiting ourself to OXS valuations characterized by two unit-demand valuations, i.e., $r=2$, we observe that the capping at $k=2$ property trivially holds, and that the graph induced by an OXS valuations has a special form: it belongs to the family of Overlapping-Product graphs defined in Section~\ref{sec:model}.
\begin{proposition}
	Given an OXS valuation $f$ described by two unit-demand valuations, the graph induced by pairs of items of maximal value as a bundle is an Overlapping-Product graph.
\end{proposition}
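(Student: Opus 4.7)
The plan is to unfold the OXS convolution explicitly on 2-element sets, identify the value $v^*$ of the optimal bundle, and then exhibit two sets $A,B\subseteq V$ realizing the Overlapping-Product structure on the graph of optimal pairs. Since each $u_i$ is a unit-demand function (so $u_i(S)=\max_{s\in S}u_i(\{s\})$) and non-negative, splitting a pair across the two components dominates assigning both items to one side; hence for every $a\neq b$ we have
\[
f(\{a,b\}) \;=\; \max\bigl(u_1(a)+u_2(b),\; u_1(b)+u_2(a)\bigr).
\]
Let $\alpha=\max_x u_1(x)$, $\beta=\max_y u_2(y)$, $A_0=\{x:u_1(x)=\alpha\}$, $B_0=\{y:u_2(y)=\beta\}$, and $v^*=\max_{a\neq b} f(\{a,b\})$.

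The main case is when there exist $a\in A_0$ and $b\in B_0$ with $a\neq b$ (equivalently, $A_0\neq B_0$ or at least one of them has size $\geq 2$); then $v^*=\alpha+\beta$. I would verify in this case that $\{x,y\}$ is optimal iff one endpoint lies in $A_0$ and the other in $B_0$. The "if" direction is immediate, and the "only if" direction follows because $u_1(x)+u_2(y)\leq \alpha+\beta$ with equality forcing $u_1(x)=\alpha$ and $u_2(y)=\beta$. This is exactly the Overlapping-Product structure witnessed by $A:=A_0$, $B:=B_0$.

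It remains to dispatch the degenerate situation $A_0=B_0=\{p\}$, where a single item $p$ is the unique maximizer of both $u_1$ and $u_2$. Setting $\alpha_2=\max_{x\neq p}u_1(x)$ and $\beta_2=\max_{y\neq p}u_2(y)$, any ordered optimizer must involve $p$ on at least one side, so $v^*=\max(\alpha+\beta_2,\;\alpha_2+\beta)$. A direct check then shows that every optimal pair has the form $\{p,z\}$ with $z$ attaining $\beta_2$ in $u_2$ or $\alpha_2$ in $u_1$ (depending on which branch of the $\max$ is active, or both in the tie case), while any pair $\{x,y\}$ with $x,y\neq p$ satisfies $f(\{x,y\})\leq \alpha_2+\beta_2<v^*$ and so cannot be optimal. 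Thus the optimal-pair graph is a star centered at $p$, which is trivially Overlapping-Product by taking $A=\{p\}$ and $B$ equal to its leaf set.

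I expect the main obstacle to be precisely the degenerate case: the naive choice $A=A_0$, $B=B_0$ fails there because $A_0\cup B_0=\{p\}$ contains no edge at all, so one has to descend to the second-best level on both coordinates and argue that these second-best vertices coincide with $p$'s neighbors in the optimal-pair graph. The rest of the argument is mainly unfolding definitions and comparing sums against $\alpha+\beta$.
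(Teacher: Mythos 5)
Your proposal is correct and follows essentially the same route as the paper's proof: the paper also splits on whether the sets of $u_1$-maximizers and $u_2$-maximizers (your $A_0,B_0$, the paper's $A_1^1,A_2^1$) admit a non-degenerate cross pair, takes $A=A_0$, $B=B_0$ in the generic case, and observes a star centered at the unique common maximizer in the degenerate case. Your write-up is a bit more explicit about computing $v^*$ and verifying the "only if" direction, but the case structure and the witnessing sets are the same.
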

\begin{proof}
	By definition there exist two unit-demand valuations $u_1,u_2$ such that $f(S)  = \bigvee_{i \in [2]} u_i(S)$.
	Let $A_i^j$ be the set of items of $j$th highest values in $u_i$, and denote the value of item in $A_i^j$ by $a_i^j$. 
	
	If there exist a pair $x\neq y$ such that $x\in A_1^1$ and $y\in A_2^1$, then in any maximal pair both valuations are maximized and a pair $r,s$ is maximizing $f$ if and only if one of them belongs to $A_1^1$ and the other to $A_2^1$ (might be the case that some of them belong to both sets). This is an Overlapping-Product graph with those sets.
	
	In no such pair exists, we have that $A_1^1 = A_2^1$ and this set is a singleton. In this case, any maximizing pair includes this item with addition of an item from some set ( either $A_1^2$, $A_2^2$ or their union, depending on the values of $a_1^2$ and $a_2^2$.). This is also an Overlapping-Product graph (in fact, this is a star graph).
\end{proof}
An immediate corollary is that the non-adaptive complexity of maximizing an OXS valuation with cardinality constraint $k=2$, is at most $\O{n\log n}$.	
	\section{Missing Proofs}\label{app:missing}
\subsection{Explicit constructions for Proposition~\ref{pro:single_edge}}\label{sec:exlicit}
\begin{lemma}
	There exist a deterministic IS algorithm that for a Singleton graph, returns the only $e\in E$, using $\lceil4\log n\rceil$ non-adaptive deterministic queries.
\end{lemma}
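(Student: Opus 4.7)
My plan is to exhibit a \emph{pair-separating family} $\mathcal{F} = \{S_1,\ldots,S_m\}$ of $m := \lceil 4\log n\rceil$ subsets of $V$ such that the response vector on any Singleton-graph input,
\[
\sigma(\{u,v\}) := (\mathbf{1}[\{u,v\}\subseteq S_i])_{i=1}^m \in \{0,1\}^m,
\]
is distinct for every pair $\{u,v\}\in\binom{V}{2}$. Given such a family, the algorithm simply queries all $m$ sets non-adaptively and outputs the unique pair whose signature matches the observed answer vector, which is justified because on the Singleton graph $\{u,v\}$ the vector of answers coincides exactly with $\sigma(\{u,v\})$.

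I reformulate via membership codewords: set $c(v)\in\{0,1\}^m$ by $c(v)_i = \mathbf{1}[v\in S_i]$; then $\sigma(\{u,v\}) = c(u)\wedge c(v)$ (bitwise AND). So it suffices to exhibit $n$ binary codewords of length $m$ whose $\binom{n}{2}$ pairwise ANDs are all distinct. The construction I propose concatenates two doubled-binary blocks. Letting $\ell = \lceil\log n\rceil$ and $b(v)\in\{0,1\}^\ell$ be the binary label of $v$, the first $2\ell$ coordinates of $c(v)$ are $(b(v), \overline{b(v)})$ and the last $2\ell$ coordinates are $(b(\pi(v)), \overline{b(\pi(v))})$ for a suitable permutation $\pi$ of $V$. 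From the first block one reads off, for each bit $j\in[\ell]$, whether $b(u)_j = b(v)_j$ and (if so) their common value, pinning $\{u,v\}$ down to a ``swap-symmetry class'' indexed by the disagreement set $T\subseteq[\ell]$ and the common bits outside $T$; the second block must then resolve the residual ambiguity within each such class.

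The main obstacle is producing $\pi$ with the tight constant $4$: a naive random-code argument only yields a looser $O(\log n)$ bound, since the per-coordinate AND-collision probability of $3/4$ combined with a union bound over the $\Theta(n^4)$ confusable pair-of-pairs requires roughly $10\log n$ coordinates. I therefore propose an explicit algebraic choice, identifying $V$ with $\mathbb{F}_{2^\ell}$ and taking $\pi(v) = \alpha v$ for a primitive element $\alpha\in\mathbb{F}_{2^\ell}^*$. The verification reduces to showing that multiplication by $\alpha$ sends every first-block swap-symmetry class injectively (up to the unordered-pair symmetry) into distinct second-block classes: the image disagreement mask of $\{u,v\}$ becomes $\alpha(u\oplus v)$, and within a class the residual parameter $s\mapsto \alpha s$ restricted to the complement of $\text{supp}(\alpha w)$ is injective by irreducibility of $\alpha$'s minimal polynomial. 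Once $\pi$ is fixed the family has exactly $\lceil 4\log n\rceil$ queries and correctness is immediate.
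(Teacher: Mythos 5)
Your overall framework --- build a pair-separating family of $\lceil 4\log n\rceil$ subsets whose AND-signatures $c(u)\wedge c(v)$ distinguish all $\binom{n}{2}$ pairs, then output the unique matching pair --- is sound, and your first block $(b(v),\overline{b(v)})$ is literally the paper's first $2\lceil\log n\rceil$ queries $\{F_i^b\}$: both recover the XOR mask $w=u\oplus v$ and the common bits of $u,v$ outside $\mathrm{supp}(w)$. The problem is the second block. Your $\pi(v)=\alpha v$ is $\mathbb{F}_2$-linear, so the second-block disagreement mask $\alpha w$ is a deterministic function of what the first block already told you; the \emph{only} new data is the common bits of $\alpha u,\alpha v$ outside $\mathrm{supp}(\alpha w)$. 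For this to resolve every class you would need: for all $w\neq 0$ and all $s$ with $\mathrm{supp}(s)\subseteq\mathrm{supp}(w)$ and $s\notin\{0,w\}$, $\mathrm{supp}(\alpha s)\not\subseteq\mathrm{supp}(\alpha w)$. This is simply false, and irreducibility/primitivity of $\alpha$'s minimal polynomial is not the kind of property that would give it. Concretely, in $\mathbb{F}_{16}=\mathbb{F}_2[\alpha]/(\alpha^4+\alpha+1)$ (so $\alpha$ is primitive), take $w=\alpha^3+\alpha^2$ and $s=\alpha^3$: then $\alpha w=\alpha^3+\alpha+1$ has support $\{3,1,0\}$ while $\alpha s=\alpha+1$ has support $\{1,0\}\subsetneq\{3,1,0\}$. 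Hence the two singleton-graph inputs with edges $\{0,\alpha^3+\alpha^2\}$ and $\{\alpha^3,\alpha^2\}$ have identical answer vectors under your family (same first-block mask $1100$ and common bits $00$, same second-block mask $1011$ and common bit $0$), so your decoder cannot distinguish them.

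The paper avoids exactly this trap by taking $\pi(v)=v^{-1}$, which is \emph{nonlinear}. The second block then recovers $u^{-1}\oplus v^{-1}=(u\oplus v)/(uv)$, and combined with $u\oplus v$ from the first block this yields $uv$; knowing both the sum and the product pins down the unordered pair as the two roots of $x^2+(u\oplus v)x+uv$. The nonlinearity is essential: with any linear relabeling the second mask is redundant and you are left trying to squeeze the residual $|T|$ bits of ambiguity into the $\ell-|\mathrm{supp}(\alpha w)|$ common-bit slots, which the counterexample shows is not achievable. If you want to keep your framework, you should either switch to the inversion map as the paper does, or pay extra coordinates.
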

\begin{proof}
	We name the vertices of $V$ using $k = \lceil\log n\rceil$ bits, define $$F_i^b = \set{v\in V\mid\text{the $i$-th bit of $v$ is $b$}},$$ and query
	$$\mathcal{F} = \set{ F_i^b \mid i\in\left[\lceil\log n\rceil\right],\ b\in\set{0,1}} .$$
	Assuming $G$ has a single edge $\set{u,v}$, each positive answer $Q(F_i^b)$ ensures that $u$ and $v$ have the same value for this bit, and its different otherwise. 
	If we look at the vertices of $V$ as elements of the filed $\mathbb{F}_{2^k}$, this set of answers encode an equation of the form $u\oplus v = a$ for some $a\in \mathbb{F}_{2^k}$.
	
	We now map each vertex $v\mapsto v^{-1}$ and query according to the new names.
	That is, we have that
	$$H_i^b = \set{v\in V\mid\text{the $i$-th bit of $v^{-1}$ is $b$}},$$ and we query
	$$\mathcal{H} = \set{ H_i^b \mid i\in\left[\lceil\log n\rceil\right],\ b\in\set{0,1}} .$$
	Hence, we can now deduce an equation of the form $u^{-1}\oplus v^{-1}= b$ for some $b\in \mathbb{F}_{2^k}$.
	
	The set of equations give the values of $u\oplus v$ and $u\cdot v$ and have a unique solution up to switching $u$'s and $v$'s names.
	
	By additionally querying $Q(V)$, we can verify that $E$ is not empty.
\end{proof}

For cases in which it is known that $G$ is star graph with center at a known vertex $v$, we describe an algorithm that uses $\lceil2\log n\rceil$ queries for the same problem.
\begin{lemma}
	There exist a deterministic IS algorithm that for a Star graph $G = (V,E)$ with $m$ edges and a known center $v$, returns the only $e\in E$, if $m=1$,
	in case that $m=0$ the algorithm returns the message "$none$", in case that $m>1$ returns the message "$more\ than\ one$", and it is implementable using $\lceil2\log n\rceil$ non-adaptive deterministic queries.
\end{lemma}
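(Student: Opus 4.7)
The plan is to leverage the fact that the center $v$ is known, which reduces the task to identifying a single ``marked'' vertex (the unique leaf, if one exists) among the $n-1$ candidate non-center vertices. Since every edge of a star must be incident to $v$, a query $Q(S)$ with $v \notin S$ is automatically negative and useless, while a query $Q(S \cup \{v\})$ returns $1$ iff $S$ contains at least one neighbor of $v$. This reduces our problem to a group-testing question on the set $V \setminus \{v\}$ with OR-queries, where ``marked'' means ``neighbor of $v$''.

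First, I would assign each vertex $u \in V \setminus \{v\}$ a binary name of length $k = \lceil \log(n-1) \rceil$. For each bit position $i \in [k]$ and each bit value $b \in \{0,1\}$, I would define
\[
F_i^b \;=\; \{ u \in V \setminus \{v\} : \text{the } i\text{-th bit of } u \text{'s name equals } b \},
\]
and query the $2k$ sets $F_i^b \cup \{v\}$. Note that $2k \le \lceil 2 \log n \rceil$, matching the claimed bound, and that all $2k$ queries can be asked simultaneously (non-adaptively).

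Next I would describe how to decode the $2k$ answers. If all answers are negative, then $v$ has no neighbors and the algorithm returns \emph{none}. If for some bit position $i$ both $Q(F_i^0 \cup \{v\})$ and $Q(F_i^1 \cup \{v\})$ are positive, then $v$ has two neighbors whose names differ in bit $i$, so $m \ge 2$ and the algorithm returns \emph{more than one}. Otherwise, for every bit $i$ exactly one of the two answers is positive; the string of ``positive bits'' names a unique vertex $u$, which must be the unique leaf, and the algorithm returns the edge $\{u,v\}$.

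Correctness is immediate in each case: for $m=0$ every $F_i^b \cup \{v\}$ is edge-free; for $m = 1$ with edge $\{u,v\}$, the answer to $Q(F_i^b \cup \{v\})$ is $1$ iff $u \in F_i^b$, so the answers read off the bits of $u$ exactly; and for $m \ge 2$ any two distinct leaves disagree on at least one bit, triggering the ``both positive'' flag at that position. There is no real obstacle here; the only mildly subtle point is to check that two distinct leaves are guaranteed to produce both answers positive at some bit, which follows from the pigeonhole fact that different binary strings of the same length must differ somewhere.
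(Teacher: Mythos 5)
Your proof is correct and takes essentially the same approach as the paper's: name the non-center vertices in binary, query each bit-slice together with the center $v$, and decode the single neighbor (or detect $m=0$ or $m\geq 2$) from which slices are positive. The paper states the decoding more tersely, but the algorithm and the case analysis are the same.
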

\begin{proof}
	We name the vertices of $U$ using $\lceil\log n\rceil$ bits and query
	$$\mathcal{F} = \set{ \set{v}\cup\set{u\in U\mid\text{the $i$-th bit of $u$ is $b$}}  \mid i\in[\lceil\log n\rceil],\ b\in\set{0,1}} .$$
	The positive answers encode the endpoint in $U$. If there is no such endpoint, all answers would be negative. While if there are more then one such endpoint, there is an index $i$ with positive answer for both values of $b$. Either way, the algorithm identify this is the case and return the proper message.
\end{proof}

\subsection{Proof of Lemma~\ref{thm:LDgeneral}}\label{sec:LDG}
\LDgeneral*
\begin{proof}
	Given a set of queries, an adversary may build a fooling set in the following way:
	We start with $G = (V,E)$, with $E=\binom{V}{2}$, the full graph over $V$ vertices.
	For every query $S$, let $E(S)$ be the set of edges with both endpoints in $S$.
	While there still exist a set $S$ such that $|E(S)|$ of size $1$, update $E(S') \gets E(S')\setminus E(S)$ for all $S'$, and $E \gets E\setminus E(S)$. That is, we guarantee that the answer for the query $S$ is negative.
	
	Since there are at most $\binom{n}{2}$ queries, and each set correspond to a single edge removal, the process end with some set $E$ and for any $S$ we have that $E(S)$ are either empty or contains at least two edges. In addition, $E$ is not empty.
	
	According to the graph induced by $E$, the answers to the queries are negative on any set $S$ such that $E(S)$ is empty, and positive for all other sets.
	Hence, no deterministic algorithm can distinguish between the graphs induced by $E$ and a graphed induced by $E\setminus\set{e}$ for any edge $e\in E$, thus failing on at least one input.
\end{proof}

\subsection{Tighter Upper Bounds for Cliques}\label{sec:cliqueapp}
\UDclique*
\begin{proof}
	We describe three different algorithms, all using $\O{n}$ queries.
	
	First, as in Lemma~\ref{thm:UDstar}, we use a divide-and-conquer technique.
	We divide the set $V$ into two disjoint sets $V_1, V_2$, such that $\left||V_1|-|V_2|\right|\leq 1$, and solve for each set recursively.
	Since $G$ is a clique graph, if both $V_1$ and $V_2$ induce no edge, there are at most one vertex from $S$ in either one. That is, the graph contains at most one edge and it can be found by querying $\singleedge{V}$ according to the construction in Appendix~\ref{sec:exlicit}.
	the total number of queries $f(n)$ is then following the formula:
	$$f(n) = f\left(\left\lceil\frac{n}{2}\right\rceil\right)+f\left(\left\lfloor\frac{n}{2}\right\rfloor\right)+\lceil4\log n\rceil+1.$$
	Using Akra–Bazzi theorem \cite{akra1998solution}, we have that $f(n) = \Th{n}$.
	
	For the other two algorithms, we start by naming the vertices arbitrarily $1,\ldots,n$ and querying,
	$$\mathcal{F}_1 = \set{ \set{1,\ldots,i}  \mid i\in[n]} \quad \mathcal{F}_2 = \set{ \set{i,\ldots,n}  \mid i\in[n]}.$$
	Since $V$ is in both sets, the total number of queries is exactly $2n-1$. 
	
	We can now identify the second smallest and second largest indexes of vertices of the Clique, those are the vertices that correspond to the first positive answer in $\mathcal{F}_1$ and the last positive answer in $\mathcal{F}_2$, respectively. If they are not identical, the algorithm found an edge.
	The only case in which those vertices are the same, is when the clique size is exactly three. In this case, only one of the clique's vertices is known.
	
	We find a second vertex using one of the following two schemes.
	We may query:
	$$\mathcal{F}_3 = \set{ \set{{i},\ldots,{(i+\lfloor\frac{n}{2}\rfloor\mod n)}}  \mid i\in[n]} .$$
	Let $S_\ell = \set{1,...,\lfloor\frac{n}{2}\rfloor}$ and $S_h = \set{\lceil\frac{n}{2}\rceil+1,...,n}$.
	If $n$ is even, one of the sets contains at least two vertices of the clique while the other contains at most one. That is, for some of the sets in $\mathcal{F}_3$ the answer is positive and for some are not, and we can identify two of the clique's vertices by the phase shift in answers.
	If $n$ is odd, we still have that either $S_\ell$ or $S_h$ contains at most one of the clique vertices.
	If one of them contains at least two vertices, we continue as before, otherwise we have that $\lceil\frac{n}{2}\rceil$ is part of the clique. In the later case, the only way that $\lceil\frac{n}{2}\rceil$ is not included in a positive query is if the other two vertices are $1$ and $n$. But then, the query $\set{n, 1,\ldots,\lfloor\frac{n}{2}\rfloor-1}$ is positive, and again we can identify  two of the clique's vertices by the phase shift in answers.
	This algorithm has a total query complexity of $3n$.

	Another option is to use a similar technique as in Proposition~\ref{pro:single_edge} for finding a single edge, and use it to find all three edges of the graph.
	As before, we ask $2n$ queries in order to identify one of the clique vertices.
	In addition, we build a randomized set of queries by sampling $t=64\ln n$ sets of vertices according to $\T{\frac{1}{2}}{V}$.
	Assuming the vertices of the clique are $\set{a,b,c}\in V$, for every sample $S$, it holds that $Q(S)$ is positive if and only if at least two of the vertices are in $S$.
	Assume $a$ is known from the first part of the algorithm.
	For every vertices $a,b,c,d\in V$, the probability that $a,d\in S$ and $b,c\notin S$ is $\frac{1}{16}$, and the probability that this would not hold any samples is  
	$$\left( 1-\frac{1}{16}\right)^t < n^{-4}.$$ 
	Whenever we have a negative answer that includes $a$, we are guaranteed that any $d\in S$ is not a part of the clique. Taking a union bound over all possible $a,b,c,d\in V$, gives that with positive probability, using $t$ samples, for any Clique graph $\set{a,b,c}$, we can rule out all other vertices as being part of an edge.
	Hence, there exist some selection of the $t$ queries that is always correct, and we can find the edge deterministically using $t$ queries and the total number of queries is $2n+64\ln n$.
	
\end{proof}

\subsection{Fully Adaptive Binary Search}\label{sec:binary}
\begin{lemma}
	There exist a deterministic $\lceil\log n\rceil$-adaptive algorithm for finding an edge using $6$ queries at each round.
\end{lemma}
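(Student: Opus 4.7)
The plan is a straightforward binary search. Maintain a subset $T \subseteq V$ with the invariant $Q(T) = 1$, starting from $T = V$; each round uses $6$ queries to shrink $|T|$ by a factor of $2$, so after $\lceil \log n \rceil$ rounds $|T| = 2$, at which point the unique pair in $T$ must be an edge and is the output.

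In a given round, split $T$ into two roughly equal halves $T = T_1 \cup T_2$, and further split each half into two quarters $T_i = T_i^1 \cup T_i^2$. The $6$ non-adaptive queries for the round are
\[ Q(T_1),\ Q(T_2),\ \text{and}\ Q(T_1^a \cup T_2^b)\ \text{for all}\ (a,b) \in \{1,2\}^2. \]
From the responses I update $T$ in two cases: if some $Q(T_i) = 1$, set $T \leftarrow T_i$; otherwise $Q(T_1) = Q(T_2) = 0$, so $T_1$ and $T_2$ contain no internal edges, and by the invariant some edge of the hidden graph still lies in $T$, which must then be a crossing edge with one endpoint in some quarter $T_1^{a^*}$ and the other in some quarter $T_2^{b^*}$; I pick such an $(a^*, b^*)$ (using $Q(T_1^{a^*} \cup T_2^{b^*}) = 1$) and set $T \leftarrow T_1^{a^*} \cup T_2^{b^*}$, a set of size $\lceil |T|/2 \rceil$. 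In both branches $|T|$ halves and the invariant $Q(T) = 1$ is maintained.

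The main content to check is the case analysis above, and in particular the claim that in the ``both halves empty'' branch at least one of the four quarter-cross queries must return $1$. This follows directly from $Q(T) = 1$ combined with $Q(T_1) = Q(T_2) = 0$: there is an edge in $T$, it is excluded from being internal to either half, so both endpoints must straddle the split, and each such edge lies in exactly one of the four sets $T_1^a \cup T_2^b$. Termination at $|T| = 2$ after at most $\lceil \log n \rceil$ rounds yields the desired edge; the empty-graph case is handled automatically, since if all $6$ queries in the very first round return $0$ then $Q(V) = 0$ and the graph has no edge at all.
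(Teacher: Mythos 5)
Your proof is correct and, once you unfold the notation, is the same algorithm as the paper's: if you relabel the four quarters $T_1^1, T_1^2, T_2^1, T_2^2$ as $V_1, V_2, V_3, V_4$, then your six queries $Q(T_1), Q(T_2), Q(T_1^a \cup T_2^b)$ are exactly the six queries $Q(V_i \cup V_j)$ over all $\{i,j\} \in \binom{[4]}{2}$ that the paper uses, and ``maintain the invariant $Q(T)=1$ and recurse into a positive pair of quarters'' is the paper's ``continue with any positive answer, removing half the vertices.''
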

\begin{proof}
	By adding dummy vertices, we may assume that $n$ is power of $2$. We prove our claim by induction. For $n \leq 4$, the algorithm simply query all possible pairs. 
	For general $n>4$, we split the vertices arbitrarily into $4$ equal size sets $V_1,...,V_4$, and query all $6$ queries of the form $Q(V_i\cup V_j)$ for $\set{i,j}\in\binom{[4]}{2}$.
	Any edge in the graph is contained in at least one such query, and we can continue with any positive answer while removing half of the vertices. Thus, having $\log n - 1$ total rounds.
\end{proof}

\end{document}